\newcommand\reallywidehat[1]{
\savestack{\tmpbox}{\stretchto{
\scaleto{
\scalerel*[\widthof{\ensuremath{#1}}]{\kern-.6pt\bigwedge\kern-.6pt}
{\rule[-\textheight/2]{1ex}{\textheight}}
}{\textheight}
}{0.5ex}}
\stackon[1pt]{#1}{\tmpbox}
}
\numberwithin{equation}{section}
\newtheorem{theo}{Theorem}[section]
\newtheorem{lem}[theo]{Lemma}
\newtheorem{coro}[theo]{Corollary}
\newtheorem{definition}[theo]{Definition}
\newtheorem{example}[theo]{Example}
\newtheorem{rem}[theo]{Remark}
\title{New constructions of cyclic constant-dimension subspace codes based on Sidon spaces and subspace polynomials}
\author{Gang Wang\textsuperscript{$^*$} \and Ming Xu \and You Gao}
\date{\small College of Science, Civil Aviation University of China, 300300, Tianjin, China. \\ E-mail: gwang06080923@mail.nankai.edu.cn; xuhaoshu2008@163.com; gao$_{-}$you@263.net.\\
$^*$Corresponding author}
\begin{document}

\maketitle

\begin{abstract}
Subspaces coding plays an important role in error correction of random network coding. To study their properties and find good constructions, the notion of cyclic subspace codes was introduced using the extension field structure of the ambient space. Those cyclic constant-dimension subspace codes (CDCs) with optimal minimum distances may have additional structures that could be effectively applied in encoding and decoding algorithms (see \cite{Trautmann2013,Etzion2011,Braun2016,Kohnert2008}). 
In this paper, two new constructions of Sidon spaces are given by tactfully adding new parameters and flexibly varying the number of parameters. Under the parameters $ n= (2r+1)k, r \ge2 $ and $p_0=\max \{i\in \mathbb{N}^+: \lfloor \frac{r}{i}\rfloor>\lfloor \frac{r}{i+1} \rfloor  \}$, the first construction produces a cyclic CDC in $\mathcal{G}_q(n, k)$ with minimum distance $2k-2$ and size $\frac{\left((r+\sum\limits_{i=2}^{p_0}(\lfloor \frac{r}{i}\rfloor-\lfloor \frac{r}{i+1} \rfloor))(q^k-1)(q-1)+r\right)(q^k-1)^{r-1}(q^n-1)}{q-1}$.
Given parameters $n=2rk,r\ge 2$ and if $r=2$, $p_0=1$, otherwise, $p_0=\max\{ i\in \mathbb{N}^+: \lceil\frac{r}{i}\rceil-1>\lfloor \frac{r}{i+1} \rfloor  \}$, a cyclic CDC in $\mathcal{G}_q(n, k)$ with minimum distance $2k-2$ and size $\frac{\left((r-1+\sum\limits_{i=2}^{p_0}(\lceil \frac{r}{i}\rceil-\lfloor \frac{r}{i+1} \rfloor-1))(q^k-1)(q-1)+r-1\right)(q^k-1)^{r-2}\lfloor \frac{q^k-2}{2}\rfloor(q^n-1)}{q-1}$ is produced by the second construction.
The sizes of our cyclic CDCs are larger than the best known results. In particular, in the case of $n=4k$, when $k$ goes to infinity, the ratio between the size of our cyclic CDC and the Sphere-packing bound (Johnson bound) is approximately equal to $\frac{1}{2}$. 
Moreover, for a prime power $q$ and positive integers $k,s$ with $1\le s< k-1$, a cyclic CDC in $\mathcal{G}_q(N, k)$ of size $e\frac{q^N-1}{q-1}$ and minimum distance $\ge 2k-2s$ is provided by subspace polynomials, where $N,e$ are positive integers. Our construction generalizes previous results and, under certain parameters, provides cyclic CDCs with larger sizes or more admissible values of $ N $ than constructions based on trinomials.

\end{abstract}

\noindent
{\it Keywords.} Random network coding, cyclic constant-dimension subspace codes, Sidon spaces, subspace polynomials, finite fields. \\

{\bf Mathematics Subject Classification: } 11R32, 12E10.

\section{Introduction}

Ahlswede et al. \cite{Ahlswede2000} 
proposed the random network coding, which has been proven effective in a incoherent network. Let $q$ be a prime power and $\mathbb{F}_q$ be the finite field with $q$ elements. Suppose that $n$ is a positive integer and $\mathbb{F}_{q}^n$ is the $n$-dimensional vector space over $\mathbb{F}_q$. Koetter and Kschischang \cite{Koetter2008} 
first introduced the concept of subspace codes, which treat messages as subspaces of some fixed vector space $\mathbb{F}_{q}^n$. Hence, a code is a set of subspaces of $\mathbb{F}_{q}^n$. In addition, the authors provided an algebraic approach to random network coding in non-coherent networks and gave error correction and the corresponding transmission model. Since then, some researchers focused on the constructions and bounds of subspace codes (cf. \cite{Ben-Sasson2016,Ben-Sasson2010,Gluesing-Luerssen2021}). 
Let $\mathcal{G}_q(n,k)$ be the set of all $k$-dimensional subspaces of $\mathbb{F}_{q}^n$. For two subspaces $U$ and $V$ of $\mathcal{G}_q(n,k)$, $U \cap V$ denotes the intersection subspace of $U$ and $V$. A nonempty subset $\mathcal{C}$ of $\mathcal{G}_q(n,k)$ is a CDC under the subspace metric $d(U,V)=\mathrm{dim}U+\mathrm{dim}V-2\mathrm{dim}(U \cap V)$, whose minimum (subspace) distance is defined as $ d(\mathcal{C})=\mathrm{min}\{d(U,V):U,V\in \mathcal{C},U\ne V \}.$
For convenience, a CDC of $\mathcal{G}_q(n,k)$ with minimum (subspace) distance $d$ is denoted by $(n,d,k)_q$-CDC.
Moreover, noting that the extension field $\mathbb{F}_{q^n}$ is also a $\mathbb{F}_{q}$-vector space with dimension $n$ and is algebraically richer than $\mathbb{F}_{q}^n$, researchers are keen on studying subspace codes composed of subspaces of $\mathbb{F}_{q^n}$.
Obviously, $\mathbb{F}_{q^n}$ is equivalent to $\mathbb{F}_{q}^n$ as a vector space over $\mathbb{F}_q$.
For a subspace $U \in \mathcal{G}_q(n,k)$ and a nonzero element $\alpha \in \mathbb{F}_{q^n}^{*}\triangleq \mathbb{F}_{q^n}\setminus \{0 \}$, the cyclic shift of $U$ by $\alpha $ is $\alpha U\triangleq \{\alpha u : u\in U \}$, which is clearly a subspace of the same dimension as $U$. Two cyclic shifts are called distinct if they form two different subspaces. A subspace code $\mathcal{C}$ is said to be cyclic if $\alpha U \in \mathcal{C}$ for any $\alpha \in \mathbb{F}_{q^n}^{*}$ and any $U \in \mathcal{C}$. The orbit of $U$ is orb$(U)\triangleq \{ \beta U : \beta \in \mathbb{F}_{q^n}^{*}\}$ and its cardinality is $\frac{q^n-1}{q^k-1}$ for some integer $k$ which divides $n$. It is known that a cyclic single orbit code orb($U$) has minimum distance $2k$ if and only if $U$ is a cyclic shift of $\mathbb{F}_{q^k}$ and $k$ must be a divisor of $n$ \cite{Otal2017}. 
 In other cases, the minimum distance of orb($U$) is less than or equal to $2k-2$ and the size of orb$(U)$ is less than or equal to $\frac{q^n-1}{q-1}$.
In particular, it was previously suggested  
that cyclic $(n,d,k)_q$-CDCs may present a useful structure that can be applied efficiently for the purpose of coding. A cyclic subspace code is a union of cyclic orbit codes \cite{Etzion2011,Braun2016,Kohnert2008}. 
In the paper \cite{Trautmann2013}, 
the authors presented two decoding algorithms, named rank-based algorithm and syndrome decoding. They explained how to decode irreducible cyclic orbit codes and determined the complexities of the proposed algorithms.
Therefore, many researchers focus on constructing cyclic $(n,2k-2,k)_q$-CDCs by unionising single orbit codes whose minimum distances can reach $2k-2$ as many as possible.

Ben-Sasson et al. \cite{Ben-Sasson2016} 
constructed a cyclic $(n,2k-2,k)_q$-CDC with size $\frac{q^n-1}{q-1}$ by using some subspace polynomials. Otal and $\ddot{\mathrm{O}}$zbudak \cite{Otal2017} 
provided a cyclic $(n,2k-2,k)_q$-CDC through the union of distinct cyclic single orbit codes generated by $r$ distinct subspace polynomials. Therefore, the size of this cyclic $(n,2k-2,k)_q$-CDC is $r \frac{q^n-1}{q-1}$. Chen and Liu \cite{Chen2018} 
used more general subspace polynomials to produce several cyclic $(n,2k-2,k)_q$-CDCs. Zhao and Tang \cite{Zhao2019} 
explored the parameters of a class of the cyclic CDCs whose subspace polynomials are more generalized.

Roth et al. \cite{Roth2017} 
provided new methods to construct cyclic $(n,d,k)_q$-CDCs using Sidon spaces. They resolved a few cases of Conjecture 2.4 proposed in \cite{Trautmann2013} 
for $ n > 2k$. By combining several cyclic $(n,2k-2,k)_q$-CDCs, Zhang and Cao \cite{Zhang2021} 
established a large cyclic $(n,2k-2,k)_q$-CDC with cardinality $2\lfloor\frac{q-1}{2} \rfloor\frac{q^n-1}{q-1}$. Feng and Wang \cite{Feng2021} 
gave a cyclic $(n,2k-2,k)_q$-CDC with size $(\lfloor\frac{n}{2k} \rfloor-1) \frac{q^k (q^n-1)}{q-1}$ for $n \ge 3k$ through the union of optimal cyclic orbit codes generated by Sidon spaces. Li and Liu \cite{Li2023} 
presented a criterion that can be applied to determine whether the sum of some distinct Sidon spaces is again a Sidon space. Based on this result, they obtained cyclic $(n,2k-2,k)_q$-CDCs via the sum of several Sidon spaces. For more information on constructions and bounds for subspace codes, the interested readers are referred to \cite{Cai2002,Gluesing-Luerssen2015, Ho2006,Koetter2002,Lidl1997,Niu2020,Zhang2022,Zhang2022b,Zhang2023,Zhang2023b,Zullo2023,Lao2022,Niu2022,Zhang2023c,Liu2023}.

The paper aims to generalize the previous works in \cite{Li2024,Yu2024} 
and obtain new cyclic $(n,2k-2,k)_q$-CDCs with larger sizes. We give new constructions of Sidon spaces and, based on these, construct several new cyclic $(n,2k-2,k)_q$-CDCs by taking the union of orbits of Sidon spaces. More explicitly, we provide a cyclic $(n,2k-2,k)_q$-CDC (see Theorem 3.3) of $\mathcal{G}_q(n,k)$ with size
\begin{flalign*}
\frac{\left((r+\sum\limits_{i=2}^{p_0}(\lfloor \frac{r}{i}\rfloor-\lfloor \frac{r}{i+1} \rfloor))(q^k-1)(q-1)+r\right)(q^k-1)^{r-1}(q^n-1)}{q-1},
\end{flalign*}
as the case $n=(2r+1)k$ and $p_0=\max \{i\in \mathbb{N}^+: \lfloor \frac{r}{i}\rfloor>\lfloor \frac{r}{i+1} \rfloor  \}$.
Moreover, we also present a cyclic $(n,2k-2,k)_q$-CDC (see Theorem 3.7) of $\mathcal{G}_q(n,k)$ with size
\begin{flalign*}
\frac{\left((r-1+\sum\limits_{i=2}^{p_0}(\lceil \frac{r}{i}\rceil-\lfloor \frac{r}{i+1} \rfloor-1))(q^k-1)(q-1)+r-1\right)(q^k-1)^{r-2}\lfloor \frac{q^k-2}{2}\rfloor(q^n-1)}{q-1},
\end{flalign*}
where $n=2rk$ and if $r=2$, $p_0=1$, otherwise, $p_0=\max\{ i\in \mathbb{N}^+: \lceil\frac{r}{i}\rceil-1>\lfloor \frac{r}{i+1} \rfloor  \}$. We emphasize that the sizes of our two cyclic $(n,2k-2,k)_q$-CDCs are larger than the best known results. Previously best known results about the sizes of cyclic $(n,2k-2,k)_q$-CDCs, together with the main results in this paper, are collected in Table 1.
Moreover, for a prime power $q$ and positive integers $k,s$ with $1\le s< k-1$, a cyclic CDC in $\mathcal{G}_q(N, k)$ of size $e\frac{q^N-1}{q-1}$ and minimum distance $\ge 2k-2s$ is provided by subspace polynomials, where $N,e$ are positive integers. 
Our construction is more general than the result in~\cite{Zhao2019}, and under certain parameter settings, it provides optimal cyclic CDCs that either support a wider range of values for $ N $ or attain larger size than previous constructions based on trinomials (see Example 3.14).


The rest of this paper is organized as follows. Section 2 presents preliminary results on Sidon spaces, subspace polynomials, and cyclic subspace codes. Sections 3.1 and 3.2 provide two new constructions of Sidon spaces and two large cyclic $(n,2k-2,k)_q$-CDCs. Section 3.3 provides a new construction of subspace polynomials and a large cyclic $(n,d,k)_q$-CDC.
Section 4 concludes this paper.





\begin{sidewaystable}[p]
\renewcommand{\arraystretch}{1.5}
\resizebox{1\textwidth}{1.6in}{\begin{tabular}{|c|c|c|c|}
\hline
\rule{0pt}{14pt} 
Parameters&Our constructions&\ The best known results& Differences \\ \hline
\multirow{3}{*}{$n=(2r+1)k$}&$\frac{\left((r+\sum\limits_{i=2}^{p_0}(\lfloor \frac{r}{i}\rfloor-\lfloor \frac{r}{i+1} \rfloor))(q^k-1)(q-1)+r\right)(q^k-1)^{r-1}(q^n-1)}{q-1}$& \multirow{2}{*}{$r\left [(q^k-1)^r(q^n-1)+\frac{(q^k-1)^{r-1}(q^n-1)}{q-1}  \right ] $} &\multirow{2}{*}{$\small \left(\sum\limits_{i=2}^{p_0}(\lfloor \frac{r}{i}\rfloor-\lfloor \frac{r}{i+1} \rfloor)\right)(q^k-1)^{r}(q^n-1)$}\\
~&{\tiny($p_0=\max \{i\in \mathbb{N}^+: \lfloor \frac{r}{i}\rfloor>\lfloor \frac{r}{i+1} \rfloor  \}$)} &~&~\\
\rule{0pt}{14pt}
~&\footnotesize (Theorem 3.3)&\footnotesize \cite[Theorem 2.2]{Yu2024}&~\\ \hline
\multirow{3}{*}{$n=2rk$}&{ \scriptsize$\frac{\left((r-1+\sum\limits_{i=2}^{p_0}(\lceil \frac{r}{i}\rceil-\lfloor \frac{r}{i+1} \rfloor-1))(q^k-1)(q-1)+r-1\right)(q^k-1)^{r-2}\lfloor \frac{q^k-2}{2}\rfloor(q^n-1)}{q-1}$}&\multirow{2}{*}{$\lfloor \frac{q^k-2}{2}\rfloor(r-1)(q^k-1)^{r-1}(q^n-1) $ }& \multirow{3}{*}{$\frac{\left((\sum\limits_{i=2}^{p_0}(\lceil \frac{r}{i}\rceil-\lfloor \frac{r}{i+1} \rfloor-1))(q^k-1)(q-1)+r-1\right)(q^k-1)^{r-2}\lfloor \frac{q^k-2}{2}\rfloor(q^n-1)}{q-1}$}\\
~&{\tiny(if $r=2$, $p_0=1$, otherwise, $p_0=\max\{ i\in \mathbb{N}^+: \lceil\frac{r}{i}\rceil-1>\lfloor \frac{r}{i+1} \rfloor  \}$)} & ~&~\\
~&\footnotesize (Theorem 3.7)&\footnotesize \cite[Theorem 3.3]{Yu2024}&~\\ \hline
\multirow{2}{*}{$n=5k$}&$3(q^k-1)^2(q^n-1)+\frac{2(q^k-1)(q^n-1)}{q-1}$&$(q^k-1)(3q^k-2)\frac{q^n-1}{q-1} $&\multirow{2}{*}{$((q^k-1)(3q-6)+1)\frac{(q^k-1)(q^n-1)}{q-1}$}\\
~&\footnotesize (Remark 3.4)&\footnotesize \cite[Theorem 3.7]{Li2024}&~\\ \hline
\end{tabular}}

\caption{The sizes of $(n,2k-2,k)_q$-CDCs}
\end{sidewaystable}
\section{Preliminaries}

In this section, we present some notions and lemmas that will be used in the following section. For a set $S$, the cardinality of $S$ is denoted by $|S|$. Given a real number $r$, let $\lceil r \rceil$ denote the least integer greater than or equal to $r$, and $\lfloor r \rfloor$ denote the greatest integer less than or equal to $r$.

The following lemma can determine the cardinality of a cyclic simple orbit subspace code.
\begin{lem}\cite{Otal2017} 
Let $U\in \mathcal{G}_q(n,k)$. Then $\mathbb{F}_{q^d}$ is the largest field such that $U$ is also $\mathbb{F}_{q^d}$-linear if and only if $|orb(U)|=\frac{q^n-1}{q^d-1}$.
\end{lem}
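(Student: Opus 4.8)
The plan is to study the natural action of the multiplicative group $\mathbb{F}_{q^n}^{*}$ on $\mathcal{G}_q(n,k)$ by cyclic shifts and to identify the stabilizer of $U$ with the group of nonzero elements of the largest field over which $U$ is linear; the stated equivalence then drops out of the orbit--stabilizer theorem. First I would write $\mathrm{Stab}(U)=\{\beta\in\mathbb{F}_{q^n}^{*}:\beta U=U\}$ and set $R=\mathrm{Stab}(U)\cup\{0\}$, so that $|\mathrm{orb}(U)|=|\mathbb{F}_{q^n}^{*}|/|\mathrm{Stab}(U)|=(q^n-1)/(|R|-1)$.

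The key step is to show that $R$ is a subfield of $\mathbb{F}_{q^n}$. Since $U$ is a finite-dimensional $\mathbb{F}_q$-subspace, for a nonzero $\beta$ the inclusion $\beta U\subseteq U$ already forces $\beta U=U$, both sides having the same finite $\mathbb{F}_q$-dimension; hence $R=\{\beta\in\mathbb{F}_{q^n}:\beta U\subseteq U\}$. From this description $R$ is closed under addition (because $U$ is an additive subgroup of $\mathbb{F}_{q^n}$) and under multiplication, and it contains $1$, hence all of $\mathbb{F}_q$ since $U$ is $\mathbb{F}_q$-linear. Thus $R$ is a subring of the field $\mathbb{F}_{q^n}$; being a finite integral domain, it is a field, so $R=\mathbb{F}_{q^d}$ for some $d\mid n$. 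By construction $\mathbb{F}_{q^d}U=R\,U\subseteq U$, so $U$ is $\mathbb{F}_{q^d}$-linear, and if $U$ is $\mathbb{F}_{q^e}$-linear then $\mathbb{F}_{q^e}\subseteq R=\mathbb{F}_{q^d}$; therefore $\mathbb{F}_{q^d}$ is exactly the largest field over which $U$ is linear.

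Combining these facts finishes both directions. If $\mathbb{F}_{q^d}$ is the largest field of linearity of $U$, then the field $R$ produced above equals $\mathbb{F}_{q^d}$, whence $|\mathrm{Stab}(U)|=q^d-1$ and $|\mathrm{orb}(U)|=(q^n-1)/(q^d-1)$. Conversely, if $|\mathrm{orb}(U)|=(q^n-1)/(q^{d}-1)$, then $|\mathrm{Stab}(U)|=q^{d}-1$; writing $R=\mathbb{F}_{q^{d'}}$ as above yields $q^{d'}-1=q^{d}-1$, so $d'=d$ and $\mathbb{F}_{q^d}$ is the largest field of linearity of $U$. I expect the only genuinely delicate point to be the verification that $R$ is a field — concretely, the reduction $\beta U\subseteq U\Rightarrow\beta U=U$ for nonzero $\beta$ together with the additive closure of $R$; once $R$ is recognized as a finite subring of $\mathbb{F}_{q^n}$, everything else is routine orbit--stabilizer bookkeeping, and this recovers the cited lemma.
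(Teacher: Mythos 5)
Your proof is correct and complete: the identification of $\mathrm{Stab}(U)\cup\{0\}$ with the largest subfield of linearity (via $\beta U\subseteq U\Rightarrow\beta U=U$ by a cardinality count, closure under addition from $U$ being an additive subgroup, and the finite-integral-domain argument) combined with orbit–stabilizer is exactly the standard argument for this fact. The paper itself only cites this lemma from Otal and \"Ozbudak without reproducing a proof, so there is nothing to compare against; your write-up supplies the expected argument with no gaps.
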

Notice that the minimum distance of orb$(U)$ equals to $2k-2s$, where $0 \le s \le k$ is the maximum value of the set $\{$dim$(U \cap \alpha U ) : \forall \alpha \in \mathbb{F}_{q^n}^{*} \}$. If $d=s=1$, then we call orb($U$) the optimal cyclic orbit code.

\begin{definition}\cite{Bachoc2017} 
A subspace $U \in \mathcal{G}_q(n,k)$ is called a Sidon space if for any nonzero elements $a, b, c, d \in U$, if $ab = cd$, then $\left \{ a\mathbb{F}_q, b\mathbb{F}_q \right \} =
\left \{c\mathbb{F}_q, d\mathbb{F}_q \right \} $.
\end{definition}
The choice to name these subspaces after Simon Sidon (1892-1941) draws inspiration from the closely related Sidon sets. From Definition 2.2, it is easy to see that a Sidon space is a subspace $U \in \mathcal{G}_q(n,k)$ such that the product of any two nonzero elements of $U$ has a unique factorization over $U$ up to a constant multiplier from $\mathbb{F}_q$. Furthermore, Roth et al. in \cite{Roth2017} 
 proposed to construct a optimal cyclic orbit code by Sidon spaces.

\begin{lem}\cite{Roth2017} 
For a subspace $U \in \mathcal{G}_q(n,k)$, the code orb$(U)$ is of size $\frac{q^n-1}{q-1}$ and minimum distance $2k-2$ if and only if $U$ is a Sidon space.
\end{lem}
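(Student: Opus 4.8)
The idea is to translate everything into statements about the dimensions $\dim(U\cap\alpha U)$, $\alpha\in\mathbb{F}_{q^n}^{*}$, using the dictionary ``an identity $a_1a_2=a_3a_4$ of products of nonzero elements of $U$'' $\longleftrightarrow$ ``a pair of elements of $U\cap\alpha U$''. Two bridges are needed. First, by orbit--stabilizer, $|\mathrm{orb}(U)|=(q^n-1)/|\mathrm{Stab}(U)|$, where $\mathrm{Stab}(U)=\{\beta\in\mathbb{F}_{q^n}^{*}:\beta U=U\}$ is a subgroup containing $\mathbb{F}_q^{*}$; hence $|\mathrm{orb}(U)|=(q^n-1)/(q-1)$ iff $\mathrm{Stab}(U)=\mathbb{F}_q^{*}$. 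Second, as discussed before the statement, the minimum distance of $\mathrm{orb}(U)$ is $2k-2s$ with $s=\max\{\dim(U\cap\alpha U):\alpha\in\mathbb{F}_{q^n}^{*},\ \alpha U\neq U\}$, and it equals $2k$ exactly when $U$ is a shift of $\mathbb{F}_{q^k}$; so it equals $2k-2$ precisely when $U$ is not a shift of $\mathbb{F}_{q^k}$ and $\dim(U\cap\alpha U)\le 1$ for every $\alpha$ with $\alpha U\neq U$. We take $k\ge2$ (the case $k=1$ is trivial).

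\textbf{($\Rightarrow$)} Let $U$ be a Sidon space. To see $\mathrm{Stab}(U)=\mathbb{F}_q^{*}$, suppose $\beta U=U$ with $\beta\notin\mathbb{F}_q$ and pick nonzero $a,b\in U$ with $b\notin a\mathbb{F}_q$ (possible since $k\ge2$); then $\beta a,\beta b\in U$ and $(\beta a)b=a(\beta b)$ is an instance of the Sidon relation, so $\{\beta a\mathbb{F}_q,b\mathbb{F}_q\}=\{a\mathbb{F}_q,\beta b\mathbb{F}_q\}$, forcing $\beta\in\mathbb{F}_q$ or $a\mathbb{F}_q=b\mathbb{F}_q$ --- both impossible. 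Thus $|\mathrm{orb}(U)|=(q^n-1)/(q-1)$, and $U$ is then not a shift of $\mathbb{F}_{q^k}$, since such a shift is $\mathbb{F}_{q^k}$-linear and hence has $|\mathrm{Stab}|\ge q^k-1>q-1$. Next, to see $\dim(U\cap\alpha U)\le1$ for $\alpha\notin\mathbb{F}_q^{*}$, suppose $x,y\in U\cap\alpha U$ are $\mathbb{F}_q$-independent; writing $x=\alpha a$, $y=\alpha b$ with $a,b\in U$ (also $\mathbb{F}_q$-independent), we get $xb=\alpha ab=ay$, so the Sidon relation yields $\{x\mathbb{F}_q,b\mathbb{F}_q\}=\{a\mathbb{F}_q,y\mathbb{F}_q\}$, hence $x\mathbb{F}_q=a\mathbb{F}_q$ (giving $\alpha\in\mathbb{F}_q$) or $x\mathbb{F}_q=y\mathbb{F}_q$ --- again both impossible. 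Therefore the minimum distance is $\ge 2k-2$, and, not being $2k$, it equals $2k-2$.

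\textbf{($\Leftarrow$)} Conversely, suppose $U$ is \emph{not} a Sidon space: there are nonzero $a,b,c,d\in U$ with $ab=cd$ but $\{a\mathbb{F}_q,b\mathbb{F}_q\}\neq\{c\mathbb{F}_q,d\mathbb{F}_q\}$. Put $\alpha=c/a$; from $ab=cd$ one also has $\alpha=b/d$, so $c=\alpha a$ and $b=\alpha d$, whence $c,b\in U\cap\alpha U$. If $\alpha\in\mathbb{F}_q^{*}$ then $c\mathbb{F}_q=a\mathbb{F}_q$ and $b\mathbb{F}_q=d\mathbb{F}_q$, contradicting the choice of the four elements; so $\alpha\notin\mathbb{F}_q^{*}$. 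Moreover $c,b$ are $\mathbb{F}_q$-independent: if $c=\lambda b$ with $\lambda\in\mathbb{F}_q^{*}$, then $ab=cd=\lambda bd$ gives $a=\lambda d$, so $\{a\mathbb{F}_q,b\mathbb{F}_q\}=\{d\mathbb{F}_q,c\mathbb{F}_q\}$, a contradiction. Now split: if $\alpha U=U$, then $\mathrm{Stab}(U)\supsetneq\mathbb{F}_q^{*}$, so $|\mathrm{orb}(U)|<(q^n-1)/(q-1)$; if $\alpha U\neq U$, then $\dim(U\cap\alpha U)\ge2$, so the minimum distance of $\mathrm{orb}(U)$ is at most $2k-4$. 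Either way the conjunction ``size $(q^n-1)/(q-1)$ and minimum distance $2k-2$'' fails, which is the contrapositive of the claim.

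\textbf{Main obstacle.} The two implications are mirror images and all manipulations are elementary field arithmetic; the delicate point --- and the real content of the argument --- is the verification of the $\mathbb{F}_q$-linear independence of the vectors produced in each direction (so that one genuinely obtains $\dim\ge2$ rather than merely $\dim\ge1$), together with the observation that it is precisely the hypothesis on $|\mathrm{orb}(U)|$ that excludes the spurious possibility $\alpha U=U$ in the converse.
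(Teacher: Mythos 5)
Your proof is correct. Note that the paper itself offers no proof of this lemma --- it is quoted from Roth--Raviv--Tamo \cite{Roth2017} --- so there is nothing internal to compare against; your argument is essentially the standard one from that reference, and it fits the paper's own toolkit: the dictionary between a factorization identity $ab=cd$ and a pair of elements of $U\cap\alpha U$ is exactly the single-space specialization of the paper's Lemma~2.4, and the size statement follows from orbit--stabilizer once one identifies $\mathrm{Stab}(U)=\mathbb{F}_q^{*}$ with the Sidon condition ruling out nontrivial stabilizing shifts. The two points you flag as delicate (the $\mathbb{F}_q$-independence checks, and using the size hypothesis to exclude $\alpha U=U$ in the converse) are indeed the substantive steps, and you handle both correctly; the only cosmetic remark is that for $k=1$ the statement is degenerate rather than ``trivial'', but the paper always assumes $k\ge 2$.
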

Lemma 2.3 shows that a cyclic $(n,2k-2,k)_q$-CDC with cardinality $\frac{q^n-1}{q-1}$ is equivalent to the orbit of a Sidon space. Moreover, Roth et al. in \cite{Roth2017} presented a criterion that can be applied to determine whether the union of cyclic $(n,2k-2,k)_q$-CDCs generated by the orbits of Sidon spaces changes the minimum distance.

\begin{lem}\cite{Roth2017} 
The following two conditions are equivalent for any distinct subspaces U and V in $\mathcal{G}_q(n,k)$.

(1) $dim(U \cap \alpha V)\le 1, ~for ~any~ \alpha \in \mathbb{F}_{q^n}^{*}$.

(2) For any nonzero $a,c \in U$ and nonzero $b,d\in V$, the equality $ab=cd$ implies that $a\mathbb{F}_q = c\mathbb{F}_q$ and $b\mathbb{F}_q = d\mathbb{F}_q$.
\end{lem}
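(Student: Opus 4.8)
The plan is to prove the two implications separately, exploiting that $\mathbb{F}_{q^n}$ is a field (so ratios of nonzero elements make sense) and that multiplication by a fixed nonzero scalar is an $\mathbb{F}_q$-linear bijection of $\mathbb{F}_{q^n}$. The conceptual bridge is the observation that a pair of $\mathbb{F}_q$-linearly independent vectors lying in $U\cap\alpha V$ carries exactly the same information as an identity $ab=cd$ with $a,c\in U$ and $b,d\in V$: an intersection produces a pair of equal ratios, and a pair of equal ratios produces the scalar $\alpha$.

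For $(2)\Rightarrow(1)$, I would argue by contradiction. Suppose $\dim(U\cap\alpha V)\ge 2$ for some $\alpha\in\mathbb{F}_{q^n}^{*}$, pick $\mathbb{F}_q$-linearly independent $x,y\in U\cap\alpha V$, and write $x=\alpha v$, $y=\alpha w$ with $v,w\in V$. Since $\alpha\ne 0$, all of $x,y,v,w$ are nonzero and $v,w$ are again $\mathbb{F}_q$-linearly independent. Then $xw=\alpha v w=\alpha w v=yv$, so $xw=yv$ is an instance of the hypothesis of $(2)$ with $a=x$, $c=y$, $b=w$, $d=v$; it yields $x\mathbb{F}_q=y\mathbb{F}_q$, contradicting the independence of $x,y$. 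Hence $\dim(U\cap\alpha V)\le 1$ for every $\alpha$.

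For $(1)\Rightarrow(2)$, given nonzero $a,c\in U$ and $b,d\in V$ with $ab=cd$, I would set $\alpha=ad^{-1}\in\mathbb{F}_{q^n}^{*}$. Then $a=\alpha d$ lies in $\alpha V$ because $d\in V$, and $c=\alpha b$ lies in $\alpha V$ because $b\in V$ and $\alpha b=abd^{-1}=cd\,d^{-1}=c$; of course $a,c\in U$. Thus $a,c\in U\cap\alpha V$, and $(1)$ forces $a,c$ to be $\mathbb{F}_q$-linearly dependent, i.e. $c=\lambda a$ for some $\lambda\in\mathbb{F}_q^{*}$. Substituting into $ab=cd$ and cancelling $a\ne 0$ gives $b=\lambda d$. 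Hence $a\mathbb{F}_q=c\mathbb{F}_q$ and $b\mathbb{F}_q=d\mathbb{F}_q$, which is $(2)$.

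The only genuinely non-routine step is the choice $\alpha=ad^{-1}$ (equivalently $\alpha=cb^{-1}$, the two agreeing precisely because $ab=cd$), together with its dual, reading off $\alpha$ from the equal ratios $x/v=y/w$ in the reverse direction; everything else is bookkeeping with nonzero scalars. The only points I would pause to verify are that each element produced is nonzero and that multiplication by $\alpha\ne 0$ preserves $\mathbb{F}_q$-linear independence, both of which are immediate. I note that the argument never uses $U\ne V$, so specialising to $U=V$ recovers a (sharpened) restatement of the Sidon-space condition underlying Lemma 2.3.
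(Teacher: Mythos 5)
Your proof is correct. The paper itself gives no proof of this lemma (it is quoted from Roth et al.\ \cite{Roth2017}), and your two implications are exactly the standard argument from that reference: reading off $\alpha=ad^{-1}$ from $ab=cd$ in one direction, and extracting the identity $xw=yv$ from a two-dimensional intersection in the other. One small caveat about your closing remark: for $U=V$ condition (1) cannot hold verbatim (take $\alpha=1$, which gives $\dim(U\cap\alpha U)=k$), which is precisely why the Sidon-space condition in Lemma 2.3 only asserts the weaker set equality $\{a\mathbb{F}_q,b\mathbb{F}_q\}=\{c\mathbb{F}_q,d\mathbb{F}_q\}$ rather than the componentwise one; this does not affect the proof of the lemma as stated.
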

Based on Lemma 2.4, we can construct a large cyclic $(n,2k-2,k)_q$-CDC by merging distinct cyclic $(n,2k-2,k)_q$-CDCs as many as possible.
\begin{lem}\cite{Yu2024} 
Let $q$ be a prime power, $k$ a positive integer, $c$ a nonzero element of $\mathbb{F}_{q^k}$,
and $\xi $ a primitive element of $\mathbb{F}_{q^k}$. Then there is a subset $ A \subset \left \{ 0, 1, . . . , q^k-2 \right \} $ of size $\left \lfloor \frac{q^k-2}{2} \right \rfloor $ satisfying~$ c\xi ^{i+j} \ne 1$ for any $i,j\in A$.
\end{lem}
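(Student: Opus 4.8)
The plan is to turn the multiplicative constraint $c\xi^{i+j}\ne 1$ into an additive one modulo $m:=q^k-1$ and then to construct $A$ by a pairing (involution) argument. Since $\xi$ has order $m$ and $c\in\mathbb{F}_{q^k}^{*}$, we may write $c=\xi^{t}$; set $s\equiv -t\pmod m$ with $0\le s<m$, and identify $\{0,1,\dots,q^k-2\}$ with $\mathbb{Z}_m$. Because $\xi^{a}=\xi^{b}$ iff $a\equiv b\pmod m$, the condition $c\xi^{i+j}\ne 1$ holds exactly when $i+j\not\equiv s\pmod m$. Thus the lemma is equivalent to finding $A\subseteq\mathbb{Z}_m$ with $|A|=\lfloor\frac{m-1}{2}\rfloor=\lfloor\frac{q^k-2}{2}\rfloor$ and $i+j\ne s$ in $\mathbb{Z}_m$ for all $i,j\in A$, the case $i=j$ being allowed.

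Next I would analyse the involution $\sigma\colon\mathbb{Z}_m\to\mathbb{Z}_m$, $\sigma(x)=s-x$. Its orbits are the fixed points (the $x$ with $2x\equiv s\pmod m$), which form a set $F$, and the $2$-element sets $\{x,s-x\}$. One checks directly that a set $A$ satisfies $i+j\ne s$ for all $i,j\in A$ if and only if $A$ contains no element of $F$ and at most one element from each $2$-element orbit; moreover every subset of such an $A$ is again admissible. Hence the largest admissible $A$ has size exactly the number of $2$-element orbits, namely $(m-|F|)/2$. A short parity count gives $|F|=1$ when $m$ is odd, $|F|=0$ when $m$ is even and $s$ is odd, and $|F|=2$ when $m$ is even and $s$ is even; in all three cases $(m-|F|)/2\ge\lfloor\frac{m-1}{2}\rfloor$. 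Picking one representative from each $2$-element orbit and then deleting elements until the cardinality is exactly $\lfloor\frac{q^k-2}{2}\rfloor$ produces the required $A$.

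I do not expect a genuine obstacle: the argument is elementary, and the only care needed is the parity bookkeeping for $|F|$ together with the trivial small cases $q^k\in\{2,3\}$, where $\lfloor\frac{q^k-2}{2}\rfloor=0$ and $A=\emptyset$ works. If one prefers an explicit $A$, note that the interval $B=\{1,2,\dots,\lfloor\frac{q^k-2}{2}\rfloor\}$ has all pairwise sums lying in $\{2,\dots,m-1\}$, hence avoids the residue $0$ and, when $m$ is even, also avoids $m-1$; translating $B$ by a constant $c$ chosen so that $2c\equiv s\pmod m$ (possible when $m$ is odd or $s$ is even) or $2c\equiv s+1\pmod m$ (possible when $m$ is even and $s$ is odd) moves the forbidden residue onto $0$ or onto $m-1$, respectively, and yields a set $A$ with the asserted size and property, making both the sum condition and the cardinality transparent.
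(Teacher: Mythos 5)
Your argument is correct and complete. The reduction of $c\xi^{i+j}\ne 1$ to the additive condition $i+j\not\equiv s\pmod{m}$ with $m=q^k-1$ and $c=\xi^{t}$, $s\equiv -t$, is exactly right, and the involution $x\mapsto s-x$ gives the clean count: an admissible $A$ is precisely a set meeting no fixed point and each $2$-cycle at most once, so the maximum size is $(m-|F|)/2$, and your parity case analysis ($|F|=1$ for $m$ odd; $|F|=0$ or $2$ for $m$ even according to the parity of $s$) shows this is always at least $\lfloor\frac{m-1}{2}\rfloor=\lfloor\frac{q^k-2}{2}\rfloor$. The explicit translated-interval construction at the end also checks out. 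Note, however, that the paper itself supplies no proof of this lemma --- it is quoted verbatim from the reference of Yu and Ji --- so there is no in-paper argument to compare yours against; your write-up has the merit of being a self-contained elementary proof where the paper simply defers to the citation. The only presentational nitpick is that you should state explicitly that the exponent sum $i+j$ is read modulo $m$ via $\xi^{m}=1$, which you implicitly use when identifying $\{0,\dots,q^k-2\}$ with $\mathbb{Z}_m$.
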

Lemma 2.5 helps us calculate the cardinality of the codes presented in Section 3.2. Furthermore, we recall the upper bounds of CDCs.

\begin{lem}\cite{Etzion2011} (Sphere-packing bound) 
Let $\mathcal{C} \subseteq \mathcal{G}_q(n,k)$ be a CDC with minimum distance $2\delta +2$. Then
$$|\mathcal{C}| \le \frac{\genfrac[]{0pt}{0}{n}{k-\delta}_q }{\genfrac[]{0pt}{0}{k}{k-\delta}_q },$$ where
$\genfrac[]{0pt}{0}{n}{k}_q \triangleq \prod \limits_{i=0}^{k-1} \frac{q^{n-i}-1}{q^{k-i}-1}.$

(Johnson bound)
Let $\mathcal{C} \subseteq \mathcal{G}_q(n,k)$ be a CDC with minimum distance $2\delta$. Then
$$|\mathcal{C}| \le \prod_{i=0}^{k-\delta}\frac{q^{n-i}-1}{q^{k-i}-1}.$$
\end{lem}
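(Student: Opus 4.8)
Both inequalities are classical; the plan is to obtain the first by a subspace-counting (anticode) argument and the second by a puncturing recursion, so I only outline the steps.

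For the sphere-packing bound I would begin from the fact that $d(\mathcal{C})=2\delta+2$ forces $\dim(U\cap V)\le k-\delta-1$ for any two distinct codewords $U,V$, since $d(U,V)=2k-2\dim(U\cap V)$. The key observation is then that $U$ and $V$ can share no $(k-\delta)$-dimensional subspace, because such a subspace would lie in $U\cap V$, whose dimension is strictly smaller than $k-\delta$. Each codeword contains exactly $\genfrac[]{0pt}{0}{k}{k-\delta}_q$ subspaces of dimension $k-\delta$; by the previous remark these families are pairwise disjoint as $W$ ranges over $\mathcal{C}$; and the total number of $(k-\delta)$-dimensional subspaces of $\mathbb{F}_q^n$ is $\genfrac[]{0pt}{0}{n}{k-\delta}_q$. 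Comparing the two counts gives $|\mathcal{C}|\,\genfrac[]{0pt}{0}{k}{k-\delta}_q\le\genfrac[]{0pt}{0}{n}{k-\delta}_q$, which is the claim.

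For the Johnson bound, write $A_q(m,2\delta;\ell)$ for the largest size of a CDC in $\mathcal{G}_q(m,\ell)$ of minimum distance $\ge 2\delta$ and argue by downward recursion. Fix a one-dimensional subspace $P$, set $\mathcal{C}_P=\{U\in\mathcal{C}:P\subseteq U\}$, and pass to $\mathbb{F}_q^{n-1}\cong\mathbb{F}_q^n/P$: the codewords in $\mathcal{C}_P$ map injectively into $\mathcal{G}_q(n-1,k-1)$, and because $P\subseteq U\cap V$ for $U,V\in\mathcal{C}_P$ one has $\dim((U/P)\cap(V/P))=\dim(U\cap V)-1$, so $d(U/P,V/P)=d(U,V)$ and the punctured code still has minimum distance $\ge 2\delta$; hence $|\mathcal{C}_P|\le A_q(n-1,2\delta;k-1)$. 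Double counting the incident pairs (one-dimensional subspace, codeword containing it) then yields $|\mathcal{C}|\frac{q^k-1}{q-1}=\sum_P|\mathcal{C}_P|\le\frac{q^n-1}{q-1}A_q(n-1,2\delta;k-1)$, i.e. $A_q(n,2\delta;k)\le\frac{q^n-1}{q^k-1}A_q(n-1,2\delta;k-1)$. Iterating this $k-\delta$ times reduces the problem to $A_q(n-k+\delta,2\delta;\delta)$; there the minimum distance equals $2\delta=2\cdot\delta$, so distinct $\delta$-dimensional codewords meet only in $0$, their nonzero vectors are pairwise disjoint, and $A_q(n-k+\delta,2\delta;\delta)\le\frac{q^{n-k+\delta}-1}{q^\delta-1}$. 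Multiplying the collected ratios with this bound gives $|\mathcal{C}|\le\prod_{i=0}^{k-\delta}\frac{q^{n-i}-1}{q^{k-i}-1}$.

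Because both proofs are short, the only delicate points are bookkeeping ones, and that is where I would concentrate: keeping the indices straight through the $k-\delta$ iterations of the recursion and handling its base case (the partial-spread estimate), checking that the quotient map really preserves the subspace distance, and confirming in the sphere-packing step that $\dim(U\cap V)\le k-\delta-1$ genuinely rules out a shared $(k-\delta)$-subspace. I would also remark that the stated inequalities drop the nested floor functions of the sharper forms of these bounds; since only the floor-free upper estimates are needed here, no rounding has to be carried through.
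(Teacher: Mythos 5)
Your proposal is correct: the disjoint-$(k-\delta)$-subspace counting argument for the sphere-packing bound and the quotient-by-a-point recursion (with the partial-spread base case) for the Johnson bound are exactly the standard proofs of these two inequalities, and both sketches check out, including the index bookkeeping through the $k-\delta$ iterations and your closing remark that the stated forms drop the nested floors. The paper itself gives no proof of this lemma --- it is quoted verbatim from \cite{Etzion2011} --- so there is nothing to compare against beyond noting that your reconstruction matches the arguments in that reference.
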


Furthermore, there is another tool called subspace polynomials, also known as linearized polynomials, which can be used to construct cyclic CDCs. A polynomial of the form $f(x)=\sum \limits_{i=0}^k a_i x^{q^i}\in \mathbb{F}_{q^n}[x]$ is called a linearized polynomial (or $q$-polynomial) and $k$ is called the $q$-degree of $f$ if $a_k\ne 0$. 
The roots of a linearized polynomial $f(x)\in \mathbb{F}_{q^n}[x]$ form a subspace of an extension field $\mathbb{F}_{q^N}$ of $\mathbb{F}_{q^n}$. 
Each root of $f(x)$ has the same multiplicity, which is either $1$ or a power of $q$. 
Conversely, each subspace of $\mathbb{F}_{q^N}$ corresponds to a linearized  polynomial over $\mathbb{F}_{q^N}$. 
Hence, researchers are particularly interested in linearized polynomials that have simple roots with respect to some field $\mathbb{F}_{q^n}$.

Beyond their role in subspace codes, linearized polynomials are also deeply connected to rank-metric codes. Each linearized polynomial over a finite field defines an \( \mathbb{F}_q \)-linear transformation, which naturally corresponds to a matrix under the rank metric. As a result, rank-metric codes can be viewed as subsets of the algebra of linearized polynomials, where the rank of each codeword coincides with the rank of the associated linear map. Notably, classical maximum rank distance (MRD) codes such as Gabidulin and twisted Gabidulin codes arise from carefully structured families of linearized polynomials with selected supports and coefficients.

\begin{definition}\cite{Ben-Sasson2016} 
A monic linearized polynomial $P$ with coefficients in $\mathbb{F}_{q^n}$ is called a subspace polynomial with respect to $\mathbb{F}_{q^n}$ if the following equivalent conditions hold:

(1) $P$ divides $x^{q^n}-x$.

(2) $P$ splits completely over $\mathbb{F}_{q^n}$ and all its roots have multiplicity 1.
\end{definition}
Given a subspace polynomial with coefficients in \( \mathbb{F}_{q^n} \), let \( \mathbb{F}_{q^N} \) be the field containing all its roots, i.e., \( \mathbb{F}_{q^N} \) is an extension field of \( \mathbb{F}_{q^n} \). 
Moreover, each subspace in $\mathcal{G}_q(N, k)$ corresponds uniquely to a subspace polynomial of $q$-degree k, which leads to the following lemma.
\begin{lem}\cite{Ben-Sasson2016} 
Two subspaces are equal if and only if their corresponding subspace polynomials are equal.
\end{lem}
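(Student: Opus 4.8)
\noindent\emph{Proof proposal.} The plan is to make explicit the bijective correspondence $U \leftrightarrow P_U$ between the elements of $\mathcal{G}_q(N,k)$ and the subspace polynomials of $q$-degree $k$ with respect to $\mathbb{F}_{q^N}$, since the lemma falls out of this correspondence immediately.

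First I would attach to each $U \in \mathcal{G}_q(N,k)$ the polynomial $P_U(x) = \prod_{u \in U}(x-u) \in \mathbb{F}_{q^N}[x]$ and verify that it is a monic linearized polynomial of $q$-degree $k$ whose roots are precisely the elements of $U$, each with multiplicity one. Monicity and the description of the roots are clear from the product form; the substantive point is $\mathbb{F}_q$-linearity, which I would prove by induction on $\dim U$. The base case $U=\{0\}$ gives $P_U(x)=x$. For the inductive step, choose $v \in U$ with $U = U' \oplus \mathbb{F}_q v$, so that $P_{U'}$ is linearized by hypothesis; then
\[
P_U(x) \;=\; \prod_{c \in \mathbb{F}_q} P_{U'}(x-cv) \;=\; \prod_{c \in \mathbb{F}_q}\bigl(P_{U'}(x) - c\,P_{U'}(v)\bigr) \;=\; P_{U'}(x)^{q} - \beta^{\,q-1} P_{U'}(x),
\]
where $\beta = P_{U'}(v)\ne 0$, using $P_{U'}(x-cv)=P_{U'}(x)-cP_{U'}(v)$ (linearity of $P_{U'}$) and the identity $\prod_{c\in\mathbb{F}_q}(y-c\beta) = y^{q} - \beta^{q-1}y$. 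Hence $P_U$ is the composite of the linearized polynomial $y \mapsto y^{q} - \beta^{q-1}y$ with $P_{U'}$, so it is linearized of $q$-degree $k$, and by Definition 2.7 it is a subspace polynomial with respect to $\mathbb{F}_{q^N}$.

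Next I would verify that $U \mapsto P_U$ is a bijection onto the set of subspace polynomials of $q$-degree $k$. For surjectivity, given such a polynomial $P$, its root set $W \subseteq \mathbb{F}_{q^N}$ is an $\mathbb{F}_q$-subspace, and since $\deg P = q^{k}$ with all roots simple we get $|W| = q^{k}$, so $W \in \mathcal{G}_q(N,k)$; then $P$ and $P_W$ are monic of degree $q^{k}$ with the same $q^{k}$ distinct roots, whence $P = P_W$. For injectivity, if $P_U = P_V$ then $U$ and $V$ are both the simple root set of this single polynomial, so $U = V$; conversely $U=V$ trivially forces $P_U = P_V$, since the defining product depends only on the underlying set. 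These two implications are exactly the assertion of the lemma.

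I do not expect a genuine obstacle: the one delicate point is the $\mathbb{F}_q$-linearity of $P_U$, and this reduces to the elementary identity $\prod_{c\in\mathbb{F}_q}(y-c\beta) = y^{q} - \beta^{q-1}y$ together with a short induction. If preferred, the first two paragraphs can be compressed by invoking Ore's classical characterization of linearized polynomials as exactly the monic polynomials whose root sets are $\mathbb{F}_q$-subspaces (with all roots of equal multiplicity), after which the correspondence $U \leftrightarrow P_U$, and hence the lemma, is immediate.
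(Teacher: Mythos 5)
The paper cites this lemma from Ben-Sasson et al.\ without giving a proof, so there is nothing to compare against line by line; your argument is the standard one and it is correct. The correspondence $U \mapsto P_U(x)=\prod_{u\in U}(x-u)$, the induction establishing that $P_U$ is linearized via the identity $\prod_{c\in\mathbb{F}_q}(y-c\beta)=y^q-\beta^{q-1}y$, and the bijectivity check (roots determine the subspace, degree and simplicity of roots determine the dimension) together give exactly the stated equivalence.
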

Lemma 2.8 shows that it can be determined whether two subspaces are equal by comparing their corresponding subspace polynomials. It is clear that $P_{\alpha V}(x)=\prod\limits_{v\in V}(x-\alpha v)$ for $\alpha \in \mathbb{F}_{q^N}^*$.
\begin{lem}\cite{Ben-Sasson2016} 
If $V \in \mathcal{G}_q(N,k)$ and $\alpha \in \mathbb{F}_{q^N}^*$ then $P_{\alpha V}(x)=\prod\limits_{v\in V}(x-\alpha v)=\alpha^{q^k}\cdot P_V(\alpha^{-1}x)$. That is, if $P_V(x)=x^{q^k}+\sum\limits_{j=0}^ia_j x^{q^j}$ then $P_{\alpha V}(x)=x^{q^k}+\sum\limits_{j=0}^i\alpha^{q^k-q^j}a_jx^{q^j}$.
\end{lem}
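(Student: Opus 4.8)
The plan is to reduce the whole statement to the elementary set identity $\alpha V=\{\alpha v:v\in V\}$ together with the product description of a subspace polynomial, $P_W(x)=\prod_{w\in W}(x-w)$, which is the formula recorded immediately before the statement. First I would observe that, since $\alpha\neq 0$, the map $v\mapsto\alpha v$ is an $\mathbb{F}_q$-linear bijection of $V$ onto $\alpha V$; hence $\alpha V\in\mathcal{G}_q(N,k)$ and the $q^k$ elements $\alpha v$ ($v\in V$) are precisely the roots of $P_{\alpha V}$, each of multiplicity one. This already yields $P_{\alpha V}(x)=\prod_{v\in V}(x-\alpha v)$.

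For the middle equality I would simply factor $\alpha$ out of each linear factor: $x-\alpha v=\alpha(\alpha^{-1}x-v)$, so that $\prod_{v\in V}(x-\alpha v)=\alpha^{|V|}\prod_{v\in V}(\alpha^{-1}x-v)=\alpha^{q^k}P_V(\alpha^{-1}x)$, using $|V|=q^k$ and the definition of $P_V$ evaluated at $\alpha^{-1}x$. The explicit coefficient formula is then a one-line expansion: writing $P_V(x)=x^{q^k}+\sum_{j=0}^i a_j x^{q^j}$ and substituting $x\mapsto\alpha^{-1}x$, one uses $(\alpha^{-1}x)^{q^j}=\alpha^{-q^j}x^{q^j}$; multiplying through by $\alpha^{q^k}$ normalizes the leading coefficient to $1$ and turns the coefficient of $x^{q^j}$ into $\alpha^{q^k-q^j}a_j$, as claimed.

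There is essentially no hard step here; this is a bookkeeping lemma. The only point I would take care to state is the justification that $\prod_{v\in V}(x-\alpha v)$ is genuinely the subspace polynomial of $\alpha V$ in the sense of Definition 2.6 — that it is monic, linearized, splits over $\mathbb{F}_{q^N}$, and has simple roots. Monicity and simplicity of roots follow from the product form together with the distinctness of the $\alpha v$; splitting over $\mathbb{F}_{q^N}$ follows from $\alpha V\subseteq\mathbb{F}_{q^N}$; and the linearized shape follows either from the explicit expansion just carried out or from the general fact that the monic vanishing polynomial of an $\mathbb{F}_q$-subspace is a $q$-polynomial. With these four observations in place, the two displayed identities are immediate.
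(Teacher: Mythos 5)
Your proof is correct; the paper itself states this lemma as a citation from Ben-Sasson et al.\ without giving a proof, and your argument is exactly the standard one: factor $x-\alpha v=\alpha(\alpha^{-1}x-v)$, use $|V|=q^k$, and expand $(\alpha^{-1}x)^{q^j}=\alpha^{-q^j}x^{q^j}$ to read off the coefficients. The extra care you take to check that $\prod_{v\in V}(x-\alpha v)$ really is the subspace polynomial of $\alpha V$ (monic, simple roots, splits over $\mathbb{F}_{q^N}$, linearized) is appropriate and complete.
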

Lemma 2.9 helps to determine the dimension of $V \cap \alpha V$, by discussing the degree of the greatest common divisor of $P_V(x)$ and $P_{\alpha V}(x)$.

\section{Constructions of new cyclic CDCs}
\subsection{Constructions of cyclic CDCs from Sidon spaces}
In this subsection, we construct a new cyclic $(n,2k-2,k)_q$-CDC with larger size by Sidon spaces.

Firstly, we recall some basic definitions and results. Let $\xi$ be a primitive element of $\mathbb{F}_{q^k}$.
For any positive integer $k \ge2$ and $n = (2r+1)k$ with $r \ge2$, suppose that $l$ and $p$ are positive integers, $p_0=\max \{i\in \mathbb{N}^+: \lfloor \frac{r}{i}\rfloor>\lfloor \frac{r}{i+1} \rfloor  \}$ and $\gamma$ is a root of an irreducible polynomial of degree $2r+1$ over $\mathbb{F}_{q^k}$. Let
\begin{flalign}
&U_{\delta_{1},\delta_2,...,\delta_{r},\theta,l}^{p}=\left\{u+\sum_{a=1}^{p}(\theta u^q+u)\delta_{al}\gamma^{al}+\sum_{b=1,b\notin \{ l,2l,...,pl\}}^{r}u
\delta_{b}\gamma^b: u \in \mathbb{F}_{q^k}\right\},
\end{flalign}
where $\delta_1,\delta_2,...,\delta_r \in \mathbb{F}_{q^k}^{*}, ~1\le a,b\le r,~\theta \in \{1,\xi,...,\xi^{q-2}\}$ and $1\le p \le p_0 $.
Moreover, if $p=1,$ then $1 \le l \le r,$ otherwise, $\frac{r}{p+1} < l \le \frac{r}{p}$.
Let
\begin{flalign}
&V_{\delta_{1},\delta_2,...,\delta_{r},l}~~=\left\{v+v^q \gamma^{l}
+\sum_{b=1,b\ne l}^{r}v\delta_{b}\gamma^b : v \in \mathbb{F}_{q^k}\right\},
\end{flalign}
where $\delta_1,\delta_2,...,\delta_r \in \mathbb{F}_{q^k}^{*}$ and $1\le b,l\le r$.

Below we will briefly show that $U_{\delta_{1},\delta_2,...,\delta_{r},\theta,l}^{p}$ and $V_{\delta_{1},\delta_2,...,\delta_{r},l}$ of the above construction are the $k$-dimensional subspaces of $\mathbb{F}_{q^n}.$

Noting that $\forall c,d \in \mathbb{F}_{q}^{*}, ~\alpha,\beta \in \mathbb{F}_{q^k}^{*}$, we have
\begin{flalign*}
& c\alpha+d\beta +\sum_{a=1}^{p}(\theta (c\alpha+d\beta )^q+c\alpha+d\beta )\delta_{al}\gamma^{al}+\sum_{b=1,b\notin \{ l,2l,...,pl\}}^{r} (c\alpha+d\beta )
\delta_{b}\gamma^b \\
& =c(\alpha+\sum_{a=1}^{p}(\theta \alpha^q+\alpha)\delta_{al}\gamma^{al}+\sum_{b=1,b\notin \{ l,2l,...,pl\}}^{r} \alpha
\delta_{b}\gamma^b)\\
&~~~+d(\beta+\sum_{a=1}^{p}(\theta \beta^q+\beta)\delta_{al}\gamma^{al}+\sum_{b=1,b\notin \{ l,2l,...,pl\}}^{r}\beta
\delta_{b}\gamma^b).
\end{flalign*}
Then, $U_{\delta_{1},\delta_2,...,\delta_{r},\theta,l}^{p}$ is a subspace of $\mathbb{F}_{q^n}$.

Since $1, \gamma,...,\gamma^{2r}$ are linearly independent over $\mathbb{F}_{q^k}$, for all $u,v \in \mathbb{F}_{q^k}^*$,
\begin{flalign*}
u+\sum_{a=1}^{p}(\theta u^q+u)\delta_{al}\gamma^{al}+\sum_{b=1,b\notin \{ l,2l,...,pl\}}^{r} u\delta_{b}\gamma^b=v+\sum_{a=1}^{p}(\theta v^q+v)\delta_{al}\gamma^{al}+\sum_{b=1,b\notin \{ l,2l,...,pl\}}^{r} v\delta_{b}\gamma^b,
\end{flalign*}
if and only if $u=v$. Therefore, $|U_{\delta_{1},\delta_2,...,\delta_{r},\theta,l}^{p}|=q^k$.

Similarly, $V_{\delta_{1},\delta_2,...,\delta_{r},l}$ is also a $k$-dimensional subspace of $\mathbb{F}_{q^n}$.
\begin{lem}
Let $U_{\delta_{1},\delta_2,...,\delta_{r},\theta,l}^{p}$ and $V_{\delta_{1},\delta_2,...,\delta_{r},l}$ be the subspaces constructed in (3.1) and (3.2), respectively, then they are both Sidon spaces.
\end{lem}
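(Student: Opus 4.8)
The plan is to verify the Sidon property directly from Definition 2.2 for each of the two families, reducing the factorization condition $ab=cd$ (with $a,b,c,d$ nonzero elements of the space) to a system of equations in $\mathbb{F}_{q^k}$ obtained by comparing coefficients with respect to the $\mathbb{F}_{q^k}$-basis $1,\gamma,\gamma^2,\dots,\gamma^{2r}$ of $\mathbb{F}_{q^n}$. For $U=U_{\delta_1,\dots,\delta_r,\theta,l}^{p}$, write a typical nonzero element as $\phi(u)=u+\sum_{a=1}^{p}(\theta u^q+u)\delta_{al}\gamma^{al}+\sum_{b\notin\{l,2l,\dots,pl\}}u\delta_b\gamma^b$ for $u\in\mathbb{F}_{q^k}^*$; similarly for the other three elements with field parameters $v,w,z$. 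The products $\phi(u)\phi(v)$ and $\phi(w)\phi(z)$ expand into $\mathbb{F}_{q^k}$-linear combinations of $\gamma^0,\dots,\gamma^{2r}$ (note $\gamma^{a l+b'}$ never exceeds degree $2r$ because $a\le p$, $b'\le r$ and $pl\le r$). Comparing the top-degree terms — here $\gamma^{2r}$ or the next available high power — forces a first relation; comparing the degree-$0$ term forces $uv=wz$; and comparing the intermediate "mixed" coefficients propagates these two relations down to the conclusion that $\{u\mathbb{F}_q,v\mathbb{F}_q\}=\{w\mathbb{F}_q,z\mathbb{F}_q\}$, which is exactly $\{a\mathbb{F}_q,b\mathbb{F}_q\}=\{c\mathbb{F}_q,d\mathbb{F}_q\}$.

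First I would handle the generic coefficients: isolating the $\gamma$-coefficient (the coefficient of $\gamma^1$, assuming $1\notin\{l,2l,\dots,pl\}$, which holds unless $l=1$) gives $uv(\delta_1)=wz(\delta_1)$ up to lower-order cross terms, so together with the constant term $uv=wz$ one gets a clean starting point; then the coefficient of $\gamma^{al}$ for $a=1,\dots,p$ brings in the Frobenius-twisted pieces $(\theta u^q+u)$ and lets me compare $(\theta u^q+u)v+u(\theta v^q+v)$ against $(\theta w^q+w)z+w(\theta z^q+z)$. Using $uv=wz$ this simplifies to $\theta(u^qv+uv^q)=\theta(w^qz+wz^q)$, i.e. $u^qv+uv^q=w^qz+wz^q$; combined with $uv=wz$ this is precisely the classical system whose only solutions are $\{u,v\}=\{w,z\}$ up to a common $\mathbb{F}_q$-scalar (this is the standard Sidon-space argument: if $uv=wz$ and $u^qv+uv^q=w^qz+wz^q$ then $u/w$ and $v/w$ — suitably normalized — satisfy a quadratic over $\mathbb{F}_q$, forcing the scalar conclusion). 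For $V=V_{\delta_1,\dots,\delta_r,l}$ the same scheme applies with $p=1$, $\theta=1$ and $\delta_l=1$, so the argument is a special case and can be disposed of quickly once the $U$-case is done.

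The main obstacle I anticipate is bookkeeping when $l=1$ or when some $al$ collide with the excluded index set, because then the "isolate the coefficient of $\gamma^1$" step is not available and one must instead read off the constant term together with the coefficient of $\gamma^{l}$ and argue more carefully that the system still forces $uv=wz$ and the Frobenius-twisted identity; I would treat $l=1$ as a separate case at the start. A secondary technical point is ensuring the intermediate products $\gamma^{i}\gamma^{j}$ with $i,j$ ranging over the support of $\phi$ genuinely stay within degrees $0,\dots,2r$ so that no reduction modulo the minimal polynomial of $\gamma$ is needed — this follows from $pl\le r$ and $b\le r$, but I would state it explicitly since it is what makes the coefficient comparison legitimate. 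Once $uv=wz$ and $\theta(u^qv+uv^q)=\theta(w^qz+wz^q)$ are established, the conclusion follows from the well-known fact (used already in Roth et al. \cite{Roth2017}) that these two equations over $\mathbb{F}_{q^k}$ with $\theta\notin\mathbb{F}_q$-related twisting force the unordered pair equality up to $\mathbb{F}_q$-scaling, completing the verification that both families are Sidon spaces.
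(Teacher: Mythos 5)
Your proposal is correct and follows essentially the same route as the paper: expand the product over the basis $1,\gamma,\dots,\gamma^{2r}$ (no reduction needed since all exponents stay below $2r+1$), compare the coefficients of $\gamma^0$ and $\gamma^l$ to obtain $u_1v_1=u_2v_2$ and $u_1^qv_1+u_1v_1^q=u_2^qv_2+u_2v_2^q$, and then apply the standard substitution $u_1/u_2=v_2/v_1=\lambda$ to force the $\mathbb{F}_q$-scalar conclusion. Your worry about the case $l=1$ is a non-issue — the paper never isolates the $\gamma^1$ coefficient separately but works directly with $\gamma^0$ and $\gamma^l$ for all $l$ — and the $V$-case, while not literally the $\theta=1,\delta_l=1$ specialization of $U$ (its $\gamma^l$-coefficient is $v^q$ rather than $\theta v^q+v$), yields the identical two-equation system, as you anticipate.
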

\begin{proof}
Firstly, we prove that $U_{\delta_{1},\delta_2,...,\delta_{r},\theta,l}^{p}$ is a Sidon space. Let $\bar{u}_1,\bar{u}_2,\bar{v}_1,\bar{v}_2 \in U_{\delta_{1},\delta_2,...,\delta_{r},\theta,l}^{p}$ be four nonzero elements and
\begin{flalign*}
\bar{u}_1&=u_1+\sum_{a=1}^{p}(\theta u_1^q+u_1)\delta_{al}\gamma^{al}+\sum_{b=1,b\notin \{ l,2l,...,pl\}}^{r}u_1\delta_{b}\gamma^b,\\
\bar{u}_2&=u_2+\sum_{a=1}^{p}(\theta u_2^q+u_2)\delta_{al}\gamma^{al}+\sum_{b=1,b\notin \{ l,2l,...,pl\}}^{r}u_2\delta_{b}\gamma^b,\\
\bar{v}_1&=v_1+\sum_{a=1}^{p}(\theta v_1^q+v_1)\delta_{al}\gamma^{al}+\sum_{b=1,b\notin \{ l,2l,...,pl\}}^{r}v_1\delta_{b}\gamma^b,\\
\bar{v}_2&=v_2+\sum_{a=1}^{p}(\theta v_2^q+v_2)\delta_{al}\gamma^{al}+\sum_{b=1,b\notin \{ l,2l,...,pl\}}^{r}v_2\delta_{b}\gamma^b
\end{flalign*}
such that $\bar{u}_1\bar{v}_1=\bar{u}_2\bar{v}_2$. Note that $1, \gamma,...,\gamma^{2r}$ are linearly independent over $\mathbb{F}_{q^k}$. Comparing the coefficients of $\gamma^{0}$ and $\gamma^{l}$ on both sides of the equation $\bar{u}_1\bar{v}_1=\bar{u}_2\bar{v}_2$, we obtain that
\begin{flalign*}
\mathrm{the}~ \mathrm{coefficient}~ \mathrm{at}~ \gamma^0&:u_1v_1=u_2v_2,\\
\mathrm{the}~\mathrm{coefficient}~ \mathrm{at} ~\gamma^{l}&:(\theta u_1^q+u_1)v_1\delta_{l}+(\theta v_1^q+v_1)u_1\delta_{l}+\sum_{a=1}^{l-1}u_1v_1\delta_{a}\delta_{l-a}\\
&~~=(\theta u_2^q+u_2)v_2\delta_{l}+(\theta v_2^q+v_2)u_2\delta_{l}+\sum_{a=1}^{l-1}u_2v_2\delta_{a}\delta_{l-a}.
\end{flalign*}
By substituting the first equation into the second equation, we have
\begin{flalign*}
\gamma^{l}&:(\delta_{l}\theta u_1^qv_1+\delta_{l}u_2v_2)+(\delta_{l}\theta u_1v_1^q+\delta_{l}u_2v_2)+\sum_{a=1}^{l-1}u_2v_2\delta_{a}\delta_{l-a}\\
&~~=(\delta_{l}\theta u_2^qv_2+u_2v_2)+(\delta_{l}\theta u_2v_2^q+u_2v_2)+\sum_{a=1}^{l-1}u_2v_2\delta_{a}\delta_{l-a}.
\end{flalign*}
By simplifying it,
we obtain that
\begin{flalign*}
\gamma^0&:u_1v_1=u_2v_2,\\
\gamma^{l}&:u_1^qv_1+u_1v_1^q=u_2^qv_2+u_2v_2^q.
\end{flalign*}
Assume that $\frac{u_1}{u_2}=\frac{v_2}{v_1}=\lambda \ne 0$, then $u_1=\lambda u_2$ and $v_2=\lambda v_1$. Substitute it into the equation of the coefficients of $\gamma^l$, then $\lambda^q u_2^qv_1+\lambda u_2v_1^q=\lambda u_2^qv_1+\lambda^q u_2v_1^q$, thereby $(\lambda^q -\lambda)u_2^qv_1=(\lambda^q -\lambda)u_2v_1^q$. If $\lambda^q -\lambda=0$, $\frac{u_1}{u_2}=\frac{v_2}{v_1}=\lambda \in \mathbb{F}_q^{*}$. If $\lambda^q -\lambda\ne 0$, $(\frac{u_2}{v_1})^q =\frac{u_2}{v_1}\in \mathbb{F}_q^{*}$. Therefore, we have $\left \{ \bar{u}_1\mathbb{F}_q,\bar{v}_1\mathbb{F}_q \right \} =\left \{\bar{u}_2\mathbb{F}_q, \bar{v}_2\mathbb{F}_q \right \} $, which means that the subspace $U_{\delta_{1},\delta_2,...,\delta_{r},\theta,l}^{p}$ is a Sidon space.

The proof of $V_{\delta_{1},\delta_2,...,\delta_{r},l}$ is similar to the above-mentioned. Hence, the subspaces $ U_{\delta_{1},\delta_2,...,\delta_{r},\theta,l}^{p}$ and $ V_{\delta_{1},\delta_2,...,\delta_{r},l}$ are both Sidon spaces.
\end{proof}

\begin{lem}
Let $U_{\delta_{1},\delta_2,...,\delta_{r},\theta,l}^{p}$ be the subspace constructed in (3.1), where $\delta_1,\delta_2,...,\delta_r \in \mathbb{F}_{q^k}^{*},\theta \in \{1,\xi,...,\xi^{q-2}\}$ and $1\le p \le p_0 $. Moreover, if $p=1,$ then $1 \le l \le r,$ otherwise, $\frac{r}{p+1} < l \le \frac{r}{p}$.
Let $V_{\delta_{1},\delta_2,...,\delta_{r},l}$ represent the subspace constructed in (3.2), where $\delta_1,\delta_2,...,\delta_r \in \mathbb{F}_{q^k}^{*}$ and $1\le l\le r$. 
Any two distinct subspaces $X$ and $Y$ in the set of $\{ U_{\delta_{1},\delta_2,...,\delta_{r},\theta,l}^p\} \cup \{ V_{\delta_{1},\delta_2,...,\delta_{r},l}\} $ satisfy the condition $dim(X \cap \alpha Y) \le 1$ for any $\alpha \in \mathbb{F}_{q^n}^{*}$.
\end{lem}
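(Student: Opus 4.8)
The plan is to invoke Lemma 2.4 and translate the condition $\dim(X\cap\alpha Y)\le 1$ for all $\alpha$ into the factorization statement: for nonzero $a,c\in X$ and nonzero $b,d\in Y$, the equation $ab=cd$ forces $a\mathbb{F}_q=c\mathbb{F}_q$ and $b\mathbb{F}_q=d\mathbb{F}_q$. So I would fix two distinct members $X,Y$ of the family $\{U^p_{\delta_1,\dots,\delta_r,\theta,l}\}\cup\{V_{\delta_1,\dots,\delta_r,l}\}$, write general nonzero elements $\bar a=a+\cdots,\ \bar c=c+\cdots\in X$ and $\bar b=b+\cdots,\ \bar d=d+\cdots\in Y$ using the explicit forms (3.1)/(3.2), and expand $\bar a\bar b=\bar c\bar d$ in the basis $1,\gamma,\dots,\gamma^{2r}$ of $\mathbb{F}_{q^n}$ over $\mathbb{F}_{q^k}$, which is legitimate since $\gamma$ has degree $2r+1$. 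The $\gamma^0$-coefficient always gives $ab=cd$; the remaining work is to exploit one or two more low-degree coefficients to pin down the $\mathbb{F}_q$-ratios, exactly as in the proof of Lemma 3.1, but now keeping track of the fact that $X$ and $Y$ carry (possibly) different parameter tuples $(\delta_i,\theta,l,p)$ versus $(\delta_i',\theta',l',p')$.

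The argument naturally splits into cases according to the types of $X$ and $Y$. \emph{Case 1: both are of $U$-type.} Here I would compare coefficients at $\gamma^{0}$ and at the ``twist position''; because the defining data differ, the relevant low-degree coefficient(s) will involve the difference of the two twist exponents $l,l'$ (or of $p,p'$), and substituting $ab=cd$ should again reduce to an identity of the shape $a^qb+ab^q=\lambda(a^qb+ab^q)$-type or, after setting $a/c=d/b=:\mu$, to $(\mu^q-\mu)(\text{something})=0$, yielding $\mu\in\mathbb{F}_q^*$ or a Frobenius-fixed ratio in $\mathbb{F}_q^*$. \emph{Case 2: both are of $V$-type.} This is the cleanest: the coefficient at $\gamma^0$ gives $ab=cd$ and the coefficient at the smaller of $l,l'$ (where only one side carries the $v^q\gamma^{l}$ term, or where the coefficients $\delta_b$ differ) produces a second scalar equation; distinctness of the tuples guarantees the two equations are independent enough to force $a\mathbb{F}_q=c\mathbb{F}_q$ and $b\mathbb{F}_q=d\mathbb{F}_q$. \emph{Case 3: $X$ is $U$-type and $Y$ is $V$-type} (or vice versa). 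Here the asymmetry of the definitions is actually helpful: one side has the Frobenius term on the $u$-part and the other does not, so comparing the $\gamma^0$ coefficient with a single well-chosen $\gamma^j$ coefficient should immediately rigidify the ratios. In every subcase I would finish by concluding from $a\mathbb{F}_q=c\mathbb{F}_q$ and $b\mathbb{F}_q=d\mathbb{F}_q$, via Lemma 2.4, that $\dim(X\cap\alpha Y)\le 1$.

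The main obstacle I anticipate is bookkeeping rather than conceptual: for the $U$-type vs.\ $U$-type case the twist lives at position $al$ for $a=1,\dots,p$, so when $X$ and $Y$ have different $(l,p)$ the set $\{l,2l,\dots,pl\}$ and $\{l',2l',\dots,p'l'\}$ may overlap partially, and I must choose a specific index $j$ at which the two sides' $\gamma^j$-coefficients differ in a way that, after substituting $ab=cd$, isolates the quantity $a^qb+ab^q-(c^qd+cd^q)$ (or a scalar multiple thereof) cleanly; ruling out degenerate coincidences of parameters that could make that coefficient vanish identically is where care is needed, and this is presumably exactly why $p_0$ and the ranges on $l$ are defined as they are. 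A secondary subtlety is handling the case $a/c\neq d/b$ (i.e.\ when the two ``obvious'' pairings fail), where one must either derive a contradiction from a further coefficient or show directly that the alternative pairing still lands in the same $\mathbb{F}_q$-orbits. Once the scalar reductions are set up correctly, each case closes by the same one-line Frobenius argument already used in Lemma 3.1, so I would organize the write-up to quote that computation rather than repeat it.
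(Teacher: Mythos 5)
Your proposal follows essentially the same route as the paper: reduce via Lemma 2.4 to the factorization condition, expand in the basis $1,\gamma,\dots,\gamma^{2r}$, split into the three cases by type ($U$--$U$, $U$--$V$, $V$--$V$), and in each case compare the $\gamma^0$ coefficient with one or two further well-chosen coefficients to force $\frac{u_1}{u_2}=\frac{v_2}{v_1}\in\mathbb{F}_q^*$ by the Frobenius argument of Lemma 3.1. The only substance you defer is exactly what the paper's six subcases of Case 1 supply --- the specific exponents ($\gamma^{\min(l,l')}$ when $l\ne l'$, the high-degree coefficient $\gamma^{p'l'+r}$ when $p\ne p'$, $\gamma^{2pl}$ when $\theta\ne\theta'$, and pairs such as $\gamma^{gl}$ or $\gamma^{l+m}$ together with $\gamma^{2pl}$ when some $\delta$ differs) and the check that the constraint $l>\frac{r}{p+1}$ (hence $2pl>(p-1)l+r$) keeps those coefficients clean --- which you correctly identify as the locus of the remaining work.
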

\begin{proof}
To prove that any two distinct subspaces $X$ and $Y$ satisfy the condition dim$(X \cap \alpha Y)\le 1$, for any $\alpha \in \mathbb{F}_{q^n}^{*},$ we distinguish between three cases that $X,Y \in \{ U_{\delta_{1},\delta_2,...,\delta_{r},\theta,l}^p\};$ $X\in \{ U_{\delta_{1},\delta_2,...,\delta_{r},\theta,l}^p\}, Y\in \{ V_{\delta_{1},\delta_2,...,\delta_{r},l}\}$; $X,Y \in \{ V_{\delta_{1},\delta_2,...,\delta_{r},l}\}$.

Case 1: $X,Y \in \{ U_{\delta_{1},\delta_2,...,\delta_{r},\theta,l}^p\}.$

Let $X=U_{\delta_{1},\delta_2,...,\delta_{r},\theta,l}^{p}$, $Y=U_{\delta_{1}',\delta_{2}',...,\delta_{r}',\theta',l'}^{p'}$ and $(\delta_{1},\delta_2,...,\delta_{r},\theta,l,p) \ne (\delta_{1}',\delta_{2}',...,\delta_{r}',\theta',l',p')$.
Suppose that $\bar{u}_1,\bar{u}_2 \in X$ and $\bar{v}_1,\bar{v}_2 \in Y$ are four nonzero elements and
\begin{flalign*}
&\bar{u}_1=u_1+\sum_{a=1}^{p}(\theta u_1^q+u_1)\delta_{al}\gamma^{al}+\sum_{b=1,b\notin \{ l,2l,...,pl\}}^{r}u_1\delta_{b}\gamma^b,\\
&\bar{u}_2=u_2+\sum_{a=1}^{p}(\theta u_2^q+u_2)\delta_{al}\gamma^{al}+\sum_{b=1,b\notin \{ l,2l,...,pl\}}^{r}u_2\delta_{b}\gamma^b,\\
&\bar{v}_1=v_1+\sum_{a=1}^{p'}(\theta' v_1^q+v_1)\delta_{al'}'\gamma^{al'}+\sum_{b=1,b\notin \{ l',2l',...,p'l'\}}^{r}v_1\delta_{b}'\gamma^b,\\
&\bar{v}_2=v_2+\sum_{a=1}^{p'}(\theta' v_2^q+v_2)\delta_{al'}'\gamma^{al'}+\sum_{b=1,b\notin \{ l',2l',...,p'l'\}}^{r}v_2\delta_{b}'\gamma^b
\end{flalign*}
such that $\bar{u}_1\bar{v}_1=\bar{u}_2\bar{v}_2$. By Lemma 2.4, it suffices to show that $\bar{u}_1\mathbb{F}_q=\bar{u}_2\mathbb{F}_q$ and $\bar{v}_1\mathbb{F}_q=\bar{v}_2\mathbb{F}_q$. We distinguish the following six subcases.

Subcase 1.1: $l \ne l'$.

Without loss of generality, assume that $l <l'. $ Since $1, \gamma,...,\gamma^{2r}$ are linearly independent over $\mathbb{F}_{q^k}$, we obtain that
\begin{flalign*}
\gamma^0&:u_1v_1=u_2v_2,\\
\gamma^{l}&:(\theta u_1^q+u_1)v_1\delta_{l}+u_1v_1\delta_{l}' +\sum_{a=1}^{l-1}u_1v_1\delta_{a}\delta_{l-a}'=(\theta u_2^q+u_2)v_2\delta_{l}+u_2v_2\delta_{l}' +\sum_{a=1}^{l-1}u_2v_2\delta_{a}\delta_{l-a}'.
\end{flalign*}
The above equations can be simplified as follows:
\begin{flalign*}
\gamma^0&:u_1v_1=u_2v_2,\\
\gamma^{l}&:u_1^qv_1=u_2^qv_2.
\end{flalign*}
Then $\frac{u_2}{u_1}=\frac{v_1}{v_2}=(\frac{u_2}{u_1})^q\in \mathbb{F}_q^{*}$.

Subcase 1.2: $l=l'$ and $p\ne p'$.

Without loss of generality, assume that $p <p'$.
Since $1, \gamma,...,\gamma^{2r}$ are linearly independent over $\mathbb{F}_{q^k}$ and $pl <p'l'\le r$, we obtain that
\begin{flalign*}
&\gamma^0~~~~~:u_1v_1=u_2v_2,\\
&\gamma^{p'l'+r}:(\theta' v_1^q+v_1)u_1\delta_r \delta_{p'l'}' +u_1v_1\delta_{p'l'}\delta_{r}' +\sum_{a=p'l'+1}^{r-1}u_1v_1\delta_{a}\delta_{p'l'+r-a}'\\
&~~~~~~~~~~=(\theta' v_2^q+v_2)u_2\delta_r \delta_{p'l'}' +u_2v_2\delta_{p'l'}\delta_{r}' +\sum_{a=p'l'+1}^{r-1}u_2v_2\delta_{a}\delta_{p'l'+r-a}'.
\end{flalign*}
The above equations can be simplified as follows:
\begin{flalign*}
&\gamma^0~~~~~:u_1v_1=u_2v_2,\\
&\gamma^{p'l'+r}:u_1v_1^q=u_2v_2^q.
\end{flalign*} Then $(\frac{v_1}{v_2})^q=\frac{u_2}{u_1}=\frac{v_1}{v_2}\in \mathbb{F}_q^{*}$.

Subcase 1.3: $l=l',p= p'$ and $ \theta \ne \theta' $.

When $p>1$, note that $l> \frac{r}{p+1}$ implies $2pl> (p-1)l+r$. Since $1, \gamma,...,\gamma^{2r}$ are linearly independent over $\mathbb{F}_{q^k}$, we obtain that
\begin{flalign*}
&\gamma^0~~:u_1v_1=u_2v_2,\\
&\gamma^{2pl}:(\theta u_1^q+u_1)(\theta' v_1^q+v_1)\delta_{pl}\delta_{pl}' +\sum_{a=\max\{0,2pl-r\},a\ne pl}^{\min\{r,2pl\}}u_1v_1\delta_{a}\delta_{2pl-a}' \\
&~~~~~~~=(\theta u_2^q+u_2)(\theta' v_2^q+v_2)\delta_{pl}\delta_{pl}' +\sum_{a=\max\{0,2pl-r\},a\ne pl}^{\min\{r,2pl\}}u_2v_2\delta_{a}\delta_{2pl-a}',
\end{flalign*}
where $\delta_0=\delta_0'=1$. The above equations can be simplified as follows:
\begin{flalign*}
&\gamma^0~~:u_1v_1=u_2v_2,\\
&\gamma^{2pl}:\theta u_1^qv_1+\theta'u_1v_1^q=\theta u_2^qv_2+\theta'u_2v_2^q.
\end{flalign*}
Assume that $\frac{u_1}{u_2}=\frac{v_2}{v_1}=\lambda \ne 0$, then $\lambda^q \theta u_2^qv_1+\lambda \theta' u_2v_1^q=\lambda \theta u_2^qv_1+\lambda^q \theta' u_2v_1^q$. This means that $(\lambda^q-\lambda)\theta u_2^qv_1=(\lambda^q-\lambda)\theta' u_2v_1^q$. If $\lambda^q-\lambda=0$, $\frac{u_1}{u_2}=\frac{v_2}{v_1}=\lambda \in \mathbb{F}_q^{*}$. If $\lambda^q-\lambda \ne 0$, $\theta u_2^qv_1=\theta'u_2v_1^q$. It follows that $\frac{\theta}{\theta'}=(\frac{v_1}{u_2})^{q-1}\in \langle\xi^{q-1}\rangle$. Since $\theta \ne \theta'$ and $\theta,\theta' \in \{1,\xi,...,\xi^{q-2}\}$, this contradicts the condition that $\frac{\theta}{\theta'}\in \langle\xi^{q-1}\rangle$. Therefore, $\frac{u_1}{u_2}=\frac{v_2}{v_1} \in \mathbb{F}_q^{*}$.

Subcase 1.4: $l=l',p= p',\theta =\theta',\delta_{jl}= \delta'_{jl'}$ $(\forall j\in \{1,2,...,g-1\})$ and $\delta_{gl} \ne \delta'_{gl'} $, where $g$ is a positive integer such that $g\le p$.

Since $1, \gamma,...,\gamma^{2r}$ are linearly independent over $\mathbb{F}_{q^k}$, we obtain that
\begin{flalign*}
\gamma^0~~&:u_1v_1=u_2v_2,\\
\gamma^{gl}~&:(\theta u_1^q+u_1)v_1\delta_{gl}+(\theta' v_1^q+v_1)u_1\delta_{gl}'+\sum_{a=1}^{g-1}(\theta u_1^q+u_1)(\theta' v_1^q+v_1)\delta_{al}\delta_{gl-al}'
\\&~~+\sum_{b=1,b\notin \{l,2l,...,(g-1)l\}}^{gl-1}u_1v_1\delta_{b}\delta_{gl-b}' \\
&~=(\theta u_2^q+u_2)v_2\delta_{gl}+(\theta' v_2^q+v_2)u_2\delta_{gl}'+\sum_{a=1}^{g-1}(\theta u_2^q+u_2)(\theta' v_2^q+v_2)\delta_{al}\delta_{gl-al}'
\\&~~+\sum_{b=1,b\notin \{l,2l,...,(g-1)l\}}^{gl-1}u_2v_2\delta_{b}\delta_{gl-b}',\\
\gamma^{2pl}&:u_1^qv_1+u_1v_1^q=u_2^qv_2+u_2v_2^q.
\end{flalign*}
The above equations can be simplified as follows:
\begin{flalign*}
\gamma^0~~&:u_1v_1=u_2v_2,\\
\gamma^{gl}~&:\delta_{gl}u_1^qv_1+\delta_{gl}'u_1v_1^q=\delta_{gl}u_2^qv_2+\delta_{gl}'u_2v_2^q,\\
\gamma^{2pl}&:u_1^qv_1+u_1v_1^q=u_2^qv_2+u_2v_2^q.
\end{flalign*}
Then $u_1^qv_1-u_2^qv_2=u_2v_2^q-u_1v_1^q$ and $\delta_{gl}(u_1^qv_1-u_2^qv_2)=\delta_{gl}'(u_2v_2^q-u_1v_1^q)$, thereby $\delta_{gl}(u_2v_2^q-u_1v_1^q)=\delta_{gl}'(u_2v_2^q-u_1v_1^q)$. This means that $(\delta_{gl}-\delta_{gl}')u_2v_2^q=(\delta_{gl}-\delta_{gl}')u_1v_1^q$.
Since $\delta_{gl}-\delta_{gl}'\ne 0$, $u_2v_2^q=u_1v_1^q$. It follows that $\frac{v_2}{v_1}=\frac{u_1}{u_2}=(\frac{v_2}{v_1})^q\in \mathbb{F}_q^{*}$.

Subcase 1.5: $ l=l',p= p',\theta =\theta', \delta_{i} = \delta'_{i}$ $(\forall i \in \{1,2,...,tl\}\cup \{(t+1)l,(t+2)l,...,pl\})$ and $\delta_{m} \ne \delta'_{m}$, where $i,t$ and $m$ are nonnegative integers such that $0\le t \le p-1$ and $m \in \{tl+1,tl+2,...,(t+1)l-1\}$.

Since $1, \gamma,...,\gamma^{2r}$ are linearly independent over $\mathbb{F}_{q^k}$, we obtain that
\begin{flalign*}
&\gamma^0~~~~:u_1v_1=u_2v_2,\\
&\gamma^{l+m}:(\theta u_1^q+u_1)v_1\delta_{l}\delta_m'+(\theta' v_1^q+v_1)u_1\delta_m\delta_{l}'+\sum_{b=\max\{0,l+m-r\},b,l+m-b\notin \{l,2l,..,pl\}}^{\min\{l+m,r\}}u_1v_1\delta_{b}\delta_{l+m-b}'\\
&~~~~~~~~~+\sum_{a=2}^{t+1}\left((\theta u_1^q+u_1)v_1\delta_{al}\delta_{l+m-al}'+(\theta' v_1^q+v_1)u_1\delta_{l+m-al}\delta_{al}'\right)\\
&~~~~~~~~=(\theta u_2^q+u_2)v_2\delta_{l}\delta_m'+(\theta' v_2^q+v_2)u_2\delta_m\delta_{l}'+\sum_{b=\max\{0,l+m-r\},b,l+m-b\notin \{l,2l,..,pl\}}^{\min\{l+m,r\}}u_2v_2\delta_{b}\delta_{l+m-b}'\\
&~~~~~~~~~+\sum_{a=2}^{t+1}\left((\theta u_2^q+u_2)v_2\delta_{al}\delta_{l+m-al}'+(\theta' v_2^q+v_2)u_2\delta_{l+m-al}\delta_{al}'\right),\\
&\gamma^{2pl}~~:u_1^qv_1+u_1v_1^q=u_2^qv_2+u_2v_2^q,
\end{flalign*}
where $\delta_0=\delta_0'=1$. Since $\theta =\theta',\delta_{al}=\delta_{al}'$ and $\delta_{l+m-al}=\delta_{l+m-al}', $ for $a\in \{{2,3,...,t+1}\}$, the above equations can be simplified as follows:
\begin{flalign*}
\gamma^0~~~~&:u_1v_1=u_2v_2,\\
\gamma^{l+m}&:\delta_{m}u_1^qv_1+\delta_{m}'u_1v_1^q=\delta_{m}u_2^qv_2+\delta_{m}'u_2v_2^q,\\
\gamma^{2pl}~~&:u_1^qv_1+u_1v_1^q=u_2^qv_2+u_2v_2^q.
\end{flalign*}
Similar to the proof of Subcase 1.4, we obtain that $\frac{v_2}{v_1}=\frac{u_1}{u_2}=(\frac{v_2}{v_1})^q\in \mathbb{F}_q^{*}$.

Subcase 1.6: $l=l',p= p',\theta =\theta',\delta_{j} = \delta'_{j}$ $(\forall j\in \{1,2,...,pl\})$ and $ \delta_{m} \ne \delta'_{m}$ $( pl < m\le r)$, where $j$ and $m$ are positive integers.

In this case, we compare the coefficients of $\gamma^0, \gamma^{2pl}$, $\gamma^{l+m}$ and the proof is similar to Subcase 1.5.

Case 2: $X\in \{U_{\delta_{1},\delta_2,...,\delta_{r},\theta,l}^{p}\} $ and $Y\in \{V_{\delta_{1}',\delta_2',...,\delta_{r}',l'}\}$.

Let $\bar{u}_1,\bar{u}_2 \in X,~\bar{v}_1,\bar{v}_2 \in Y$ be four nonzero elements and
\begin{flalign*}
\bar{u}_1&=u_1+\sum_{a=1}^{p}(\theta u_1^q+u_1)\delta_{al}\gamma^{al}+\sum_{b=1,b\notin \{ l,2l,...,pl\}}^{r}u_1\delta_{b}\gamma^b,\\
\bar{u}_2&=u_2+\sum_{a=1}^{p}(\theta u_2^q+u_2)\delta_{al}\gamma^{al}+\sum_{b=1,b\notin \{ l,2l,...,pl\}}^{r}u_2\delta_{b}\gamma^b ,\\
\bar{v}_1&=v_1+v_1^q\gamma^{l'}+\sum_{b=1,b\ne l'}^{r}v_1\delta_{b}'\gamma^b,\\
\bar{v}_2&=v_2+v_2^q\gamma^{l'}+\sum_{b=1,b\ne l'}^{r}v_1\delta_{b}'\gamma^b
\end{flalign*}
such that $\bar{u}_1\bar{v}_1=\bar{u}_2\bar{v}_2$.

If $l\ne l'$, we compare the coefficients of $\gamma^0,~\gamma^{\mathrm{min}(l,l')}$ and the proof is similar to Subcase 1.1.

If $l=l'$ and $p>1$, we compare the coefficients of $\gamma^0,~ \gamma^{l},~\gamma^{pl+r}$ and the proof is similar to Subcase 1.2.

If $l=l'$ and $p=1$, we compare the coefficients of $\gamma^0,~\gamma^{2l}$ and the proof is similar to Subcase 1.1.

Case 3: $X,~Y \in \{V_{\delta_{1},\delta_2,...,\delta_{r},l}\} $.

Let $X= V_{\delta_{1},\delta_2,...,\delta_{r},l} $, $Y=V_{\delta_{1}',\delta_2',...,\delta_{r}',l'}$ and $(\delta_{1},\delta_2,...,\delta_{r},l) \ne (\delta_{1}',\delta_2',...,\delta_{r}',l')$.
Suppose that $\bar{u}_1,\bar{u}_2 \in X, ~\bar{v}_1,\bar{v}_2 \in Y$ are four nonzero elements and
\begin{flalign*}
\bar{u}_1&=u_1+u_1^q\gamma^{l}+\sum_{b=1,b\ne l}^{r}u_1\delta_{b}\gamma^b,~\bar{u}_2=u_2+u_2^q\gamma^{l}+\sum_{b=1,b\ne l}^{r}u_1\delta_{b}\gamma^b ,\\
\bar{v}_1&=v_1+v_1^q\gamma^{l'}+\sum_{b=1,b\ne l'}^{r}v_1\delta_{b}'\gamma^b,~\bar{v}_2=v_2+v_2^q\gamma^{l'}+\sum_{b=1,b\ne l'}^{r}v_1\delta_{b}'\gamma^b
\end{flalign*}
such that $\bar{u}_1\bar{v}_1=\bar{u}_2\bar{v}_2$.

If $l\ne l'$, we compare the coefficients of $\gamma^0,~\gamma^{\mathrm{min}(l,l')}$ and the proof is similar to Subcase 1.1.

If $l=l'$ and $\delta_m\ne \delta_m$, where $1\le m\le r$ and $m\ne l$, we compare the coefficients of $\gamma^0,~\gamma^{l},~\gamma^{l+m}$ and the proof is similar to Subcase 1.3.

In all cases that $X,Y \in \{ U_{\delta_{1},\delta_2,...,\delta_{r},\theta,l}^p\};$ $X\in \{ U_{\delta_{1},\delta_2,...,\delta_{r},\theta,l}^p\}, Y\in \{ V_{\delta_{1},\delta_2,...,\delta_{r},l}\}$; $X,Y \in \{ V_{\delta_{1},\delta_2,...,\delta_{r},l}\}$, let $\bar{u}_1,\bar{u}_2 \in X, \bar{v}_1,\bar{v}_2 \in Y$ such that $\bar{u}_1\bar{v}_1=\bar{u}_2\bar{v}_2$, then we can obtain that $\frac{u_1}{u_2}=\frac{v_2}{v_1}\in \mathbb{F}_q^{*}$, which means that $\bar{u}_1\mathbb{F}_q=\bar{u}_2\mathbb{F}_q$ and $\bar{v}_1\mathbb{F}_q=\bar{v}_2\mathbb{F}_q$. From Lemma 2.4, we conclude that any two distinct subspaces $X$ and $Y$ satisfy the condition dim$(X \cap \alpha Y) \le 1$ for any $\alpha \in \mathbb{F}_{q^n}^{*}$.
\end{proof}
Combining the results in Lemmas 3.1 and 3.2, we construct a large cyclic $(n,2k-2,k)_q$-CDC via the union of orbits of Sidon spaces provided by (3.1) and (3.2) in the following theorem.
\begin{theo}
For any positive integer $k\ge 2$ and $n= (2r+1)k$ with $r \ge 2$, suppose that $U_{\delta_{1},\delta_2,...,\delta_{r},\theta,l}^{p}$ and $V_{\delta_{1},\delta_2,...,\delta_{r},l}$ are the subspaces constructed in (3.1) and (3.2), respectively. Assume that \begin{flalign*}
&\mathcal{C}_{\delta_{1},\delta_2,...,\delta_{r},\theta,l}^{p}=\{\alpha U_{\delta_{1},\delta_2,...,\delta_{r},\theta,l}^{p}: \alpha \in \mathbb{F}_{q^n}^{*} \},~\mathcal{D}_{\delta_{1},\delta_2,...,\delta_{r},l}=\{\alpha V_{\delta_{1},\delta_2,...,\delta_{r},l}: \alpha \in \mathbb{F}_{q^n}^{*} \}.
\end{flalign*}
Let
\begin{flalign*}
&\mathcal{D}=\bigcup_{l=1}^{r}\bigcup_{\delta_1,\delta_2,...,\delta_r}\mathcal{D}_{\delta_{1},\delta_2,...,\delta_{r},l},~\mathcal{C}^{1}=\bigcup_{l=1}^{r}\bigcup_{\delta_1,\delta_2,...,\delta_r} \bigcup_{\theta \in H} \mathcal{C}_{\delta_{1},\delta_2,...,\delta_{r},\theta,l}^{1},
\\ &\mathcal{C}^{p}=\bigcup_{l=\lfloor \frac{r}{p+1} \rfloor+1}^{\lfloor \frac{r}{p}\rfloor}\bigcup_{\delta_1,\delta_2,...,\delta_r} \bigcup_{\theta \in H} \mathcal{C}_{\delta_{1},\delta_2,...,\delta_{r},\theta,l}^{p},
\end{flalign*}
where $p\in \{2,3,...,p_0\}$, then $\mathcal{C}=\mathcal{C}^1\cup \left(\bigcup\limits_{p=2}^{p_0}\mathcal{C}^{p}\right) \cup \mathcal{D}$ is a cyclic $(n,2k-2,k)_q$-CDC of size
\begin{flalign*}
\frac{\left((r+\sum_{i=2}^{p_0}(\lfloor \frac{r}{i}\rfloor-\lfloor \frac{r}{i+1} \rfloor))(q^k-1)(q-1)+r\right)(q^k-1)^{r-1}(q^n-1)}{q-1}.
\end{flalign*}
\end{theo}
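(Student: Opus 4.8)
The plan is to realise $\mathcal{C}$ as the disjoint union of the single orbits $\mathcal{C}^{p}_{\delta_{1},\dots,\delta_{r},\theta,l}$ and $\mathcal{D}_{\delta_{1},\dots,\delta_{r},l}$ prescribed in the statement, read the parameters of each orbit off the Sidon property, use Lemma~3.2 to control how different orbits interact, and then count. First I would invoke Lemma~3.1: every $U^{p}_{\delta_{1},\dots,\delta_{r},\theta,l}$ and every $V_{\delta_{1},\dots,\delta_{r},l}$ is a Sidon space, so by Lemma~2.3 each of $\mathcal{C}^{p}_{\delta_{1},\dots,\delta_{r},\theta,l}$ and $\mathcal{D}_{\delta_{1},\dots,\delta_{r},l}$ is a single orbit of size $\frac{q^{n}-1}{q-1}$ with minimum distance exactly $2k-2$. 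Then, for two distinct generators $X,Y$ chosen from $\{U^{p}_{\delta_{1},\dots,\delta_{r},\theta,l}\}\cup\{V_{\delta_{1},\dots,\delta_{r},l}\}$, Lemma~3.2 gives $\dim(X\cap\alpha Y)\le 1$ for every $\alpha\in\mathbb{F}_{q^{n}}^{*}$; hence the orbits of $X$ and $Y$ are disjoint (otherwise $\beta X=\gamma Y$ would force $\dim(X\cap\beta^{-1}\gamma Y)=\dim X=k\ge 2$), and any two subspaces lying in different orbits satisfy $d(\alpha X,\beta Y)=2k-2\dim(X\cap\alpha^{-1}\beta Y)\ge 2k-2$. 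Being a union of orbits, $\mathcal{C}$ is cyclic, and combining the cross-orbit bound with the within-orbit value above yields $d(\mathcal{C})=2k-2$.

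It then remains to count how many distinct subspaces occur among the generators, since $|\mathcal{C}|$ is that number times $\frac{q^{n}-1}{q-1}$. Here I would check that the parametrisation is injective up to one obvious relation. Each generator is the graph, over its $\gamma^{0}$-coordinate, of an $\mathbb{F}_{q}$-linear map $u\mapsto(\,\cdot\,)$ into $\bigoplus_{b=1}^{2r}\mathbb{F}_{q^{k}}\gamma^{b}$, so two generators are equal iff these maps coincide; and because $1,\gamma,\dots,\gamma^{2r}$ are linearly independent over $\mathbb{F}_{q^{k}}$ and, as $k\ge 2$, the maps $u\mapsto u$ and $u\mapsto u^{q}$ on $\mathbb{F}_{q^{k}}$ are $\mathbb{F}_{q}$-linearly independent, one reads off, from the coordinate functions of a $U$-type generator, its set $\{l,2l,\dots,pl\}$ of positions carrying a nonzero Frobenius part (hence $l$ and $p$), then $\theta$, then all of $\delta_{1},\dots,\delta_{r}$; moreover no $U$-type equals a $V$-type, since at each position the $V$-type's coordinate is either the pure Frobenius $v^{q}$ or a pure $\mathbb{F}_{q^{k}}$-scalar, whereas the $U$-type's coordinate at $\gamma^{l}$, namely $(\theta u^{q}+u)\delta_{l}$, has both a nonzero Frobenius part and a nonzero linear part. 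The only relation is that $V_{\delta_{1},\dots,\delta_{r},l}$ does not depend on $\delta_{l}$; so the $U$-type subspaces are in bijection with the admissible tuples $(\delta_{1},\dots,\delta_{r},\theta,l,p)$ and the $V$-type ones with the data $\big(l,(\delta_{b})_{b\ne l}\big)$.

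Counting the admissible pairs $(p,l)$ — namely $l\in\{1,\dots,r\}$ when $p=1$ and $l\in\{\lfloor\frac{r}{p+1}\rfloor+1,\dots,\lfloor\frac{r}{p}\rfloor\}$ when $2\le p\le p_{0}$ — produces $r+\sum_{i=2}^{p_{0}}\big(\lfloor\frac{r}{i}\rfloor-\lfloor\frac{r}{i+1}\rfloor\big)$ of them, each carrying $(q^{k}-1)^{r}$ choices of $(\delta_{1},\dots,\delta_{r})$ and $q-1$ choices of $\theta$, while there are $r(q^{k}-1)^{r-1}$ generators of $V$-type. Since all these subspaces are distinct and their orbits are disjoint of common size $\frac{q^{n}-1}{q-1}$,
\[
|\mathcal{C}|=\left(\Big(r+\sum_{i=2}^{p_{0}}\big(\lfloor\tfrac{r}{i}\rfloor-\lfloor\tfrac{r}{i+1}\rfloor\big)\Big)(q^{k}-1)^{r}(q-1)+r(q^{k}-1)^{r-1}\right)\frac{q^{n}-1}{q-1},
\]
which is exactly the value claimed once the common factor $(q^{k}-1)^{r-1}$ is pulled out.

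The one genuinely delicate part is the case analysis underlying Lemma~3.2 — the verification of $\dim(X\cap\alpha Y)\le 1$ for the pairs of shapes $U/U$, $U/V$, $V/V$ — which I would take as already established. Inside the present theorem the main care needed is in the bookkeeping: that the parametrisation is injective up to the single relation that $V_{\delta_{1},\dots,\delta_{r},l}$ ignores $\delta_{l}$, that the ranges of $l$ attached to distinct values of $p$ do not overlap (so that no $U$-type generator is counted twice and the union over $p$ is honest), and that the per-$p$ counts telescope to exactly $r+\sum_{i=2}^{p_{0}}\big(\lfloor\frac{r}{i}\rfloor-\lfloor\frac{r}{i+1}\rfloor\big)$.
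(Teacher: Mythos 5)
Your proposal is correct and follows essentially the same route as the paper: Lemmas 2.3 and 3.1 give each orbit size $\frac{q^n-1}{q-1}$ and distance $2k-2$, Lemmas 2.4 and 3.2 handle the cross-orbit intersections, and the size is obtained by counting generators. You actually supply more detail than the paper (which states the cardinality without derivation), in particular the injectivity of the parametrisation via the Frobenius-position set $\{l,2l,\dots,pl\}$ and the fact that $V_{\delta_1,\dots,\delta_r,l}$ ignores $\delta_l$; note only that the $l$-ranges for $p=1$ and $p\ge 2$ do overlap, but this is harmless since distinct pairs $(p,l)$ still yield distinct position sets and hence distinct subspaces.
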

\begin{proof}
By Lemmas 2.3 and 3.1, we have that $\mathcal{C}_{\delta_{1},\delta_2,...,\delta_{r},\theta,l}^{p}$ and $\mathcal{D}_{\delta_{1},\delta_2,...,\delta_{r},l}$ are cyclic $(n,2k-2,k)_q$-CDCs with cardinality $ \frac{q^n-1}{q-1}$. From Lemmas 2.4 and 3.2, we conclude that dim$(X \cap \alpha Y) \le 1$ for any two distinct subspaces $X,Y\in \{ U_{\delta_{1},\delta_2,...,\delta_{r},\theta,l}^{p}\} \cup \{V_{\delta_{1},\delta_2,...,\delta_{r},l} \}$ and any $\alpha \in \mathbb{F}_{q^n}^{*}$, which implies that
the minimum distance of $\mathcal{C}$ is still $2k-2$. Therefore, $\mathcal{C}$ is a cyclic $(n,2k-2,k)_q$-CDC with cardinality
$\frac{\left((r+\sum\limits_{i=2}^{p_0}(\lfloor \frac{r}{i}\rfloor-\lfloor \frac{r}{i+1} \rfloor))(q^k-1)(q-1)+r\right)(q^k-1)^{r-1}(q^n-1)}{q-1}$.
\end{proof}
\begin{rem}
Let $n=5k$, we have $p_0=2$ and
\begin{align*}
&U_{\delta_1,\delta_{2},\theta,1}^2=\left\{u+(\theta u^q+ u )\delta_{1}\gamma +(\theta u^q+ u )\delta_{2}\gamma^2 : u\in \mathbb{F}_{q^k} \right\},\\
&U_{\delta_{1},\delta_{2},\theta,l}^1=\left\{ u+ (\theta u^q+ u )\delta_{l}\gamma^{l}+u\delta_{3-l}\gamma^{3-l}: u\in \mathbb{F}_{q^k}\right\},\\
&V_{\delta_{1},\delta_{2},l}~~=\left\{ v+ v^q \gamma^{l}+v\delta_{3-l} \gamma^{3-l}: v\in \mathbb{F}_{q^k}\right\},
\end{align*}
based on Theorem 3.3. Therefore we give a cyclic $(n,2k-2,k)_q$-CDC of size $3(q^k-1)^2(q^n-1)$$+\frac{2(q^k-1)(q^n-1)}{q-1}$, while the construction in \cite{Li2024} 
yields a cyclic $(n,2k-2,k)_q$-CDC of size $ (q^k-1)(3q^k-2)\frac{q^n-1}{q-1}$ and the construction in \cite{Yu2024} 
yields a cyclic $(n,2k-2,k)_q$-CDC of size $ 2((q^k-1)^2(q^n-1)+\frac{(q^k-1)(q^n-1)}{q-1})$. It follows that the size of our cyclic $(n,2k-2,k)_q$-CDC is larger than the results in \cite{Li2024} and \cite{Yu2024}, 
cf. Table 1.
\end{rem}

\begin{example}
Let $q=3,k=3$ and $n=5\times 3=15.$ The construction in \cite{Li2024} 
yields a cyclic $(15,4,3)_3$-CDC of size
\begin{flalign*}
&(q^k-1)(3q^k-2)\frac{q^n-1}{q-1}=(3^3-1)(3\times 3^3-2)\times \frac{3^{15}-1}{2}\\
&=(3\times 3^3-2)\frac{(3^3-1)(3^{15}-1)}{2}
\end{flalign*}
and rate approximately $0.474$, according to \cite[Definition 2]{Koetter2008}. 
The construction in \cite{Yu2024} 
yields a cyclic $(15,4,3)_3$-CDC of size
\begin{flalign*}
&2((q^k-1)^2(q^n-1)+\frac{(q^k-1)(q^n-1)}{q-1})=2((3^3-1)^2(3^{15}-1)+\frac{(3^3-1)(3^{15}-1)}{2})\\
&=(4(3^3-1)+2) \frac{(3^3-1)(3^{15}-1)}{2}>(3\times 3^3-2)\frac{(3^3-1)(3^{15}-1)}{2}
\end{flalign*}
and rate approximately $0.480$.
By Theorem 3.3, we give a cyclic $(15,4,3)_3$-CDC of size
\begin{flalign*}
&3(q^k-1)^2(q^n-1)+\frac{2(q^k-1)(q^n-1)}{q-1}=3 (3^3-1)^2(3^{15}-1)+\frac{2(3^3-1)(3^{15}-1)}{2}\\
&=(6(3^3-1)+2) \frac{(3^3-1)(3^{15}-1)}{2} >(4(3^3-1)+2) \frac{(3^3-1)(3^{15}-1)}{2}
\end{flalign*}
and rate approximately $0.488$. Therefore, in this case, the size of cyclic $(15,4,3)_3$-CDC constructed in Theorem 3.3 is larger than those presented in \cite{Li2024,Yu2024} 
and is also the best known code.
\end{example}

\subsection{Constructions of cyclic CDCs from irreducible polynomials and a variant of Sidon spaces}
In this subsection, we use irreducible polynomials and a variant of Sidon spaces in \cite{Yu2024} 
 to present a construction of cyclic $(n,2k-2,k)_q$-CDCs.

For any positive integer $k \ge 2$ and $n=2rk$ with $r\ge 2$, suppose that $l$ and $p$ are positive integers, $\xi$ is a primitive element of $\mathbb{F}_{q^k}$. 
If $r=2$, $p_0=1$, otherwise, $p_0=\max\{ i\in \mathbb{N}^+: \lceil\frac{r}{i}\rceil-1>\lfloor \frac{r}{i+1} \rfloor  \}$. 
Assume that $f(x)$ is an irreducible polynomial over $\mathbb{F}_{q^k}$ with degree $2r$, $\gamma$ is a root of $f(x)$ and $ A \subset \left \{1, \xi, . . . , \xi^{q^k-2} \right \} $ satisfies $ f(0)\alpha \beta \ne 1$ for any $\alpha, \beta \in A$. 
Since $\alpha=\xi^i,\beta=\xi^j$ for $i,j\in \{0,1,...,q^k-2\}$, it follows that $ f(0)\alpha \beta \ne 1$ is equivalent to $f(0)\xi^{i+j}\ne 1$. From Lemma 2.5, we have $|A| = \left \lfloor \frac{q^k-2}{2} \right \rfloor $. 
Let
\begin{flalign}
&U_{\delta_{1},\delta_2,...,\delta_{r},\theta,l}^{p}=\left\{u+\sum_{a=1}^{p}(\theta u^q+u)\delta_{al}\gamma^{al}+\sum_{b=1,b\notin \{l,2l,...,pl\}}^{r}u\delta_{b}\gamma^b: u\in \mathbb{F}_{q^k} \right\},
\end{flalign}
where $\delta_1,\delta_2,...,\delta_{r-1} \in \mathbb{F}_{q^k}^{*},~\delta_r \in A,~ 1\le a,b\le r,~ \theta \in \{1,\xi,...,\xi^{q-2}\}$ and $1\le p \le p_0$. Moreover, if $p=1,$ then $1\le l \le r-1 $, otherwise, $\frac{r}{p+1} < l < \frac{r}{p}$. Let
\begin{flalign}
&V_{\delta_{1},\delta_2,...,\delta_{r},l}~~=\left\{v+v^q \gamma^{l}
+\sum_{b=1,b\ne l}^{r}v\delta_{b}\gamma^b: v\in \mathbb{F}_{q^k}\right\},
\end{flalign}
where $\delta_1,\delta_2,...,\delta_{r-1} \in \mathbb{F}_{q^k}^{*},~\delta_r \in A, ~1\le b \le r$ and $1\le l \le r-1 $.

\begin{lem}
Let $U_{\delta_{1},\delta_2,...,\delta_{r},\theta,l}^{p}$ be the subspace constructed in (3.3), where $\delta_1,\delta_2,...,\delta_{r-1} \in \mathbb{F}_{q^k}^{*},~\delta_r \in A,~ \theta \in \{1,\xi,...,\xi^{q-2}\}$ and $1\le p \le p_0$. Moreover, if $p=1,$ then $1\le l \le r-1 $, otherwise, $\frac{r}{p+1} < l < \frac{r}{p}$. Let $V_{\delta_{1},\delta_2,...,\delta_{r},l}$ represent the subspace constructed in (3.4), where $\delta_1,\delta_2,...,\delta_{r-1} \in \mathbb{F}_{q^k}^{*},~\delta_r \in A$ and $1\le l \le r-1.$ Then $U_{\delta_{1},\delta_2,...,\delta_{r},\theta,l}^{p}$ and $V_{\delta_{1},\delta_2,...,\delta_{r},l}$ are both Sidon spaces. Any two distinct subspaces $X$ and $Y$ in the set of $\{ U_{\delta_{1},\delta_2,...,\delta_{r},\theta,l}^{p}\} \cup \{ V_{\delta_{1},\delta_2,...,\delta_{r},l} \}$ satisfy the condition $dim(X \cap \alpha Y) \le 1$ for any $\alpha \in \mathbb{F}_{q^n}^{*}$. 
\end{lem}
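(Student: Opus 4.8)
The plan is to mirror the structure of Lemmas 3.1 and 3.2, adapting the coefficient-comparison technique of Lemma 3.2 to the slightly different setup (here $n=2rk$ rather than $(2r+1)k$, $\gamma$ is a root of an irreducible polynomial of degree $2r$ rather than $2r+1$, the top coefficient $\delta_r$ ranges over the special set $A$, and the loop bound on $l$ when $p=1$ is $1\le l\le r-1$). First I would verify the Sidon property for $U^p_{\delta_1,\dots,\delta_r,\theta,l}$ and $V_{\delta_1,\dots,\delta_r,l}$ exactly as in the proof of Lemma 3.1: take four nonzero elements with a product relation $\bar u_1\bar v_1=\bar u_2\bar v_2$, use that $1,\gamma,\dots,\gamma^{2r-1}$ are linearly independent over $\mathbb{F}_{q^k}$, compare the coefficients of $\gamma^0$ and $\gamma^l$ to obtain $u_1v_1=u_2v_2$ and $u_1^qv_1+u_1v_1^q=u_2^qv_2+u_2v_2^q$, and then conclude from the $\lambda^q-\lambda=0$ versus $\lambda^q-\lambda\neq0$ dichotomy that $\{\bar u_1\mathbb{F}_q,\bar v_1\mathbb{F}_q\}=\{\bar u_2\mathbb{F}_q,\bar v_2\mathbb{F}_q\}$.

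Next I would handle the intersection condition $\dim(X\cap\alpha Y)\le1$ by invoking Lemma 2.4: it suffices to show that for nonzero $a,c\in X$ and nonzero $b,d\in Y$, the equality $ab=cd$ forces $a\mathbb{F}_q=c\mathbb{F}_q$ and $b\mathbb{F}_q=d\mathbb{F}_q$. I split into the same three cases as in Lemma 3.2: (i) $X,Y$ both of $U$-type; (ii) $X$ of $U$-type, $Y$ of $V$-type; (iii) $X,Y$ both of $V$-type. Within case (i) I would run through subcases according to which parameter first distinguishes the two subspaces: $l\neq l'$ (compare $\gamma^0$ and $\gamma^{\min(l,l')}$), $l=l'$ but $p\neq p'$ (compare $\gamma^0$ and a high-degree coefficient such as $\gamma^{p'l'+r}$, which is legitimate here since $p'l'+r\le 2r-1$ under $\frac{r}{p'+1}<l'<\frac{r}{p'}$ — note the strict inequality on $l'$, which guarantees $p'l'<r$), $\theta\neq\theta'$ (compare $\gamma^0$ and $\gamma^{2pl}$, again $2pl\le 2r-1$ from $pl<r$), and then the successive subcases where the $\delta$'s first differ, at an index of the form $jl$, or $m\in\{tl+1,\dots,(t+1)l-1\}$, or $m>pl$. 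The key point in the final cluster of subcases, exactly as in Lemma 3.2, is to also use the $\gamma^{2pl}$ equation $u_1^qv_1+u_1v_1^q=u_2^qv_2+u_2v_2^q$ together with the $\gamma^{\text{(mixed)}}$ equation $\delta_m u_1^qv_1+\delta_m'u_1v_1^q=\delta_m u_2^qv_2+\delta_m'u_2v_2^q$ to deduce $(\delta_m-\delta_m')u_2v_2^q=(\delta_m-\delta_m')u_1v_1^q$, hence $u_1v_1^q=u_2v_2^q$ and $\frac{v_2}{v_1}=\frac{u_1}{u_2}\in\mathbb{F}_q^*$. Cases (ii) and (iii) reduce to the same model computations by comparing $\gamma^0$ and one or two further coefficients.

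The genuinely new ingredient — and the step I expect to be the main obstacle — is the role of the set $A$ and the condition $f(0)\xi^{i+j}\neq1$. Because $\gamma$ is a root of the degree-$2r$ irreducible $f(x)=\sum_{t=0}^{2r}f_tx^t$ with $f_{2r}=1$, we have $\gamma^{2r}=-\sum_{t=0}^{2r-1}f_t\gamma^t$, so products $\gamma^a\gamma^b$ with $a+b\ge 2r$ fold back and in particular the coefficient of $\gamma^0$ in $\bar u_i\bar v_i$ is not simply $u_iv_i$ but $u_iv_i(1+f_0\delta_r\delta_r')$ (when the top terms $u\delta_r\gamma^r$ multiply each other and reduce via the constant term $f(0)=f_0$). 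The hypothesis $f(0)\delta_r\delta_r'\neq1$, which is exactly what $A$ is chosen to guarantee, ensures $1+f_0\delta_r\delta_r'\neq 0$ so that the $\gamma^0$-equation still yields $u_1v_1=u_2v_2$ (up to the common nonzero factor), and all the subsequent deductions go through. I would therefore make this reduction explicit at the start of the intersection argument, check the analogous nonvanishing at whichever reduced coefficient is being used in each subcase, and otherwise let the computations parallel those of Lemma 3.2 verbatim, noting at the end — as in Lemma 3.2 — that in every case one arrives at $\frac{u_1}{u_2}=\frac{v_2}{v_1}\in\mathbb{F}_q^*$, so $\bar u_1\mathbb{F}_q=\bar u_2\mathbb{F}_q$ and $\bar v_1\mathbb{F}_q=\bar v_2\mathbb{F}_q$, and Lemma 2.4 gives the claim.
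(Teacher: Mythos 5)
Your proposal is correct and follows essentially the same route as the paper: the paper likewise proves the Sidon property by comparing the $\gamma^0$ and $\gamma^l$ coefficients, where the $\gamma^0$ coefficient folds to $u_1v_1(1-f(0)\delta_r^2)$ and the defining property of $A$ guarantees $1-f(0)\delta_r^2\neq 0$, and then settles the intersection condition by observing that $\gamma^{2r}$ is a linear combination of $1,\gamma,\dots,\gamma^{2r-1}$ and rerunning the case analysis of Lemma 3.2 (your degree checks $pl<r$, hence $2pl\le 2r-1$ and $p'l'+r\le 2r-1$, are exactly the point that makes this work). One cosmetic slip: with $\gamma^{2r}=-\sum_t f_t\gamma^t$ the folded factor is $1-f(0)\delta_r\delta_r'$, not $1+f_0\delta_r\delta_r'$, and it is precisely the condition $f(0)\alpha\beta\neq 1$ on $A$ that makes it nonzero.
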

\begin{proof}
Firstly, we prove that $U_{\delta_{1},\delta_2,...,\delta_{r},\theta,l}^{p}$ is a Sidon space. Let $\bar{u}_1,\bar{u}_2,\bar{v}_1,\bar{v}_2 \in U_{\delta_{1},\delta_{2},...,\delta_{r},\theta,l}^{p} $ be four nonzero elements and
\begin{flalign*}
\bar{u}_1&=u_1+\sum_{a=1}^{p}(\theta u_1^q+u_1)\delta_{al}\gamma^{al}+\sum_{b=1,b\notin \{ l,2l,...,pl\}}^{r}u_1\delta_{b}\gamma^b,\\
\bar{u}_2&=u_2+\sum_{a=1}^{p}(\theta u_2^q+u_2)\delta_{al}\gamma^{al}+\sum_{b=1,b\notin \{ l,2l,...,pl\}}^{r}u_2\delta_{b}\gamma^b,\\
\bar{v}_1&=v_1+\sum_{a=1}^{p}(\theta v_1^q+v_1)\delta_{al}\gamma^{al}+\sum_{b=1,b\notin \{ l,2l,...,pl\}}^{r}v_1\delta_{b}\gamma^b,\\
\bar{v}_2&=v_2+\sum_{a=1}^{p}(\theta v_2^q+v_2)\delta_{al}\gamma^{al}+\sum_{b=1,b\notin \{ l,2l,...,pl\}}^{r}v_2\delta_{b}\gamma^b
\end{flalign*}
such that $\bar{u}_1\bar{v}_1=\bar{u}_2\bar{v}_2$. Since $1,\gamma,...,\gamma^{2r-1} $ are linearly independent over $\mathbb{F}_{q^k}$, $f(x)$ is an irreducible polynomial over $\mathbb{F}_{q^k}$ with degree $2r$, the constant term $c =f(0)\ne 0$ and $\gamma $ is a root of $f(x)$, then we have
\begin{flalign*}
\gamma^0~&:u_1v_1(1-c\delta_{r}^2)=u_2v_2(1-c\delta_{r}^2).
\end{flalign*}
According to the property of the set $A$, we have $(1-c\delta_{r}^2)\ne 0$, which implies that $u_1v_1 =u_2v_2$. 
Substituting $u_1v_1 =u_2v_2$ into the coefficient equations of both sides of the identity $\bar{u}_1\bar{v}_1=\bar{u}_2\bar{v}_2$ with respect to $\gamma^l$, we obtain
\begin{flalign*}
\gamma^l~&:u_1^qv_1+u_1v_1^q=u_2^qv_2+u_2v_2^q.
\end{flalign*}
Set $\frac{u_1}{u_2}=\frac{v_2}{v_1}=\lambda \ne 0$, then $\lambda^q u_2^qv_1+\lambda u_2v_1^q=\lambda u_2^qv_1+\lambda^q u_2v_1^q$. It follows that $(\lambda-\lambda^q)u_2v_1^q=(\lambda-\lambda^q)u_2^qv_1$. If $\lambda-\lambda^q=0$, then $\frac{u_1}{u_2}=\frac{v_2}{v_1}=\lambda \in \mathbb{F}_q^{*}$. If $\lambda-\lambda^q \ne 0$, then $\frac{u_2}{v_1}=(\frac{u_2}{v_1})^q \in \mathbb{F}_q^{*}$.
By Definition 1, $U_{\delta_{1},\delta_2,...,\delta_{r},\theta,l}^{p}$ is a Sidon space. Similarly, $V_{\delta_{1},\delta_2,...,\delta_{r},l}$ is also a Sidon space.

Since $\gamma^{2r}$ is a linear combination of $1,\gamma,...,\gamma^{2r-1}$, for the cases that $X,Y \in \{ U_{\delta_{1},\delta_2,...,\delta_{r},\theta,l}^p\};$ $X\in \{ U_{\delta_{1},\delta_2,...,\delta_{r},\theta,l}^p\}, Y\in \{ V_{\delta_{1},\delta_2,...,\delta_{r},l}\}$; $X,Y \in \{ V_{\delta_{1},\delta_2,...,\delta_{r},l}\}$, the proof of any two distinct subspaces $X$ and $Y$ satisfying the condition $\mathrm{dim}(X \cap \alpha Y) \le 1$ for any $\alpha \in \mathbb{F}_{q^n}^{*}$, is similar to Lemma 3.2.
\end{proof}
Combining the result in Lemma 3.6, we construct a large cyclic $(n,2k-2,k)$-CDC via the union of Sidon spaces provided by (3.3) and (3.4) in the following theorem.
\begin{theo}
For any positive integer $k\ge 2$ and $n= 2rk$ with $r\ge 2$, suppose that $U_{\delta_{1},\delta_{2},...,\delta_{r},\theta,l}^{p}$ and $V_{\delta_{1},\delta_{2},...,\delta_{r},l}$ are the subspaces constructed in (3.3) and (3.4), respectively. Assume that \begin{flalign*}
&\mathcal{C}_{\delta_{1},\delta_2,...,\delta_{r},\theta,l}^{p}=\{\alpha U_{\delta_{1},\delta_2,...,\delta_{r},\theta,l}^{p}: \alpha \in \mathbb{F}_{q^n}^{*} \},~\mathcal{D}_{\delta_{1},\delta_2,...,\delta_{r},l}=\{\alpha V_{\delta_{1},\delta_2,...,\delta_{r},l}: \alpha \in \mathbb{F}_{q^n}^{*} \}.
\end{flalign*}Let
\begin{flalign*}
&\mathcal{D}=\bigcup_{l=1}^{r-1}\bigcup_{\delta_1,\delta_2,...,\delta_r}\mathcal{D}_{\delta_{1},\delta_2,...,\delta_{r},l},~\mathcal{C}^{1}=\bigcup_{l=1}^{r-1}\bigcup_{\delta_1,\delta_2,...,\delta_r} \bigcup_{\theta \in H} \mathcal{C}_{\delta_{1},\delta_2,...,\delta_{r},\theta,l}^{1},\\
&\mathcal{C}^{p}=\bigcup_{l=\lfloor \frac{r}{p+1} \rfloor+1}^{\lceil\frac{r}{p}\rceil-1}\bigcup_{\delta_1,\delta_2,...,\delta_r} \bigcup_{\theta \in H} \mathcal{C}_{\delta_{1},\delta_2,...,\delta_{r},\theta,l}^{p},
\end{flalign*}
where $p\in \{2,3,...,p_0\}$, then $\mathcal{C}=\mathcal{C}^1 \cup \left(\bigcup\limits_{p=2}^{p_0}\mathcal{C}^{p}\right) \cup \mathcal{D}$ is a cyclic $(n,2k-2,k)_q$-CDC of size
\begin{flalign*}
\frac{\left((r-1+\sum\limits_{i=2}^{p_0}(\lceil \frac{r}{i}\rceil-\lfloor \frac{r}{i+1} \rfloor-1))(q^k-1)(q-1)+r-1\right)(q^k-1)^{r-2}\lfloor \frac{q^k-2}{2}\rfloor(q^n-1)}{q-1}.
\end{flalign*}
\end{theo}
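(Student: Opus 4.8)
The plan is to carry out the same three-step argument as in the proof of Theorem~3.3. First I would invoke Lemma~3.6: every subspace $U^{p}_{\delta_{1},\dots,\delta_{r},\theta,l}$ of (3.3) and every subspace $V_{\delta_{1},\dots,\delta_{r},l}$ of (3.4) that occurs in the union is a Sidon space, so by Lemma~2.3 each orbit $\mathcal{C}^{p}_{\delta_{1},\dots,\delta_{r},\theta,l}$ and each orbit $\mathcal{D}_{\delta_{1},\dots,\delta_{r},l}$ is a cyclic $(n,2k-2,k)_{q}$-CDC of cardinality exactly $\frac{q^{n}-1}{q-1}$. Second, the remaining assertion of Lemma~3.6 says that any two \emph{distinct} subspaces $X,Y$ in $\{U^{p}_{\delta_{1},\dots,\delta_{r},\theta,l}\}\cup\{V_{\delta_{1},\dots,\delta_{r},l}\}$ satisfy $\dim(X\cap\alpha Y)\le 1$ for all $\alpha\in\mathbb{F}_{q^{n}}^{*}$; feeding this into Lemma~2.4 shows that the union $\mathcal{C}=\mathcal{C}^{1}\cup\bigl(\bigcup_{p=2}^{p_{0}}\mathcal{C}^{p}\bigr)\cup\mathcal{D}$ still has minimum distance $2k-2$ (it cannot exceed $2k-2$ since each single orbit already realises that distance).

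The third and principal step is to compute $|\mathcal{C}|$ exactly. The same inequality $\dim(X\cap\alpha Y)\le 1<k$ (here $k\ge 2$ is used) forces the orbits of two distinct generators $X\ne Y$ to be disjoint, since $X=\beta Y$ would give $\dim(X\cap\beta Y)=\dim X=k$. Hence $|\mathcal{C}|$ equals the number of \emph{distinct} subspaces occurring among the generators, times $\frac{q^{n}-1}{q-1}$. To count the distinct generators I would establish that the parametrizations (3.3) and (3.4) are injective: expanding an element of such a subspace uniquely in the basis $1,\gamma,\dots,\gamma^{2r-1}$ of $\mathbb{F}_{q^{n}}$ over $\mathbb{F}_{q^{k}}$, its $\gamma^{0}$-component recovers $u$, its $\gamma^{b}$-component with $b\notin\{l,2l,\dots,pl\}$ recovers $u\delta_{b}$ (hence $\delta_{b}$), while its $\gamma^{al}$-component equals $\theta\delta_{al}u^{q}+\delta_{al}u$; since the $\delta$'s are nonzero, the set of exponents carrying a nonzero $u^{q}$-part is exactly $\{l,2l,\dots,pl\}$, which recovers $l$ and $p$, and then $\theta$ and the $\delta_{al}$ follow from that component using the $\mathbb{F}_{q^{k}}$-linear independence of $u\mapsto u$ and $u\mapsto u^{q}$. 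The same reading shows $\delta_{l}$ is a dummy parameter in (3.4) and that a $U$-type subspace (no exponent has a pure $u^{q}$-coefficient) never equals a $V$-type subspace (the exponent $\gamma^{l}$ has coefficient $v^{q}$).

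With injectivity in hand the count splits over the three families. The family $\mathcal{D}$ contributes $(r-1)(q^{k}-1)^{r-2}\lfloor\frac{q^{k}-2}{2}\rfloor$ orbits: $l$ ranges over $\{1,\dots,r-1\}$, $\delta_{r}$ over $A$ (of size $\lfloor\frac{q^{k}-2}{2}\rfloor$ by Lemma~2.5), and the $r-2$ remaining parameters $\delta_{b}$ with $b\in\{1,\dots,r-1\}\setminus\{l\}$ freely over $\mathbb{F}_{q^{k}}^{*}$. The family $\mathcal{C}^{1}$ contributes $(r-1)(q-1)(q^{k}-1)^{r-1}\lfloor\frac{q^{k}-2}{2}\rfloor$ orbits: all of $\delta_{1},\dots,\delta_{r}$ now occur, with $\delta_{r}\in A$ and $\delta_{1},\dots,\delta_{r-1}\in\mathbb{F}_{q^{k}}^{*}$, while $l\in\{1,\dots,r-1\}$ and $\theta\in H$ give the factor $(r-1)(q-1)$. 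For $p\in\{2,\dots,p_{0}\}$, the constraint $\tfrac{r}{p+1}<l<\tfrac{r}{p}$ forces $(p+1)l>r$ and $r$ not to be a multiple of $l$, so again all of $\delta_{1},\dots,\delta_{r}$ occur, and $l$ ranges over $\{\lfloor\frac{r}{p+1}\rfloor+1,\dots,\lceil\frac{r}{p}\rceil-1\}$, a set of size $\lceil\frac{r}{p}\rceil-1-\lfloor\frac{r}{p+1}\rfloor$ which the definition of $p_{0}$ makes positive; thus $\mathcal{C}^{p}$ contributes $\bigl(\lceil\frac{r}{p}\rceil-1-\lfloor\frac{r}{p+1}\rfloor\bigr)(q-1)(q^{k}-1)^{r-1}\lfloor\frac{q^{k}-2}{2}\rfloor$ orbits. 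Adding the three contributions, factoring out $(q^{k}-1)^{r-2}\lfloor\frac{q^{k}-2}{2}\rfloor$, and multiplying by $\frac{q^{n}-1}{q-1}$ gives precisely the stated cardinality.

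The hard part is this bookkeeping, not the distance argument: one must check carefully that the injectivity of (3.3) and (3.4) really identifies the distinct subspaces and separates the $U$-type from the $V$-type family, and that the strict inequalities $\tfrac{r}{p+1}<l<\tfrac{r}{p}$ for $p\ge 2$ together with the definition of $p_{0}$ are translated correctly into the closed form $\sum_{i=2}^{p_{0}}\bigl(\lceil\frac{r}{i}\rceil-\lfloor\frac{r}{i+1}\rfloor-1\bigr)$. Everything else is a direct application of Lemmas~2.3, 2.4 and~3.6, exactly as in Theorem~3.3.
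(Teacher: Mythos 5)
Your proposal is correct and follows essentially the same route as the paper, which itself proves Theorem~3.7 by appealing to Lemmas~2.3, 2.4 and~3.6 and arguing ``similar to the proof of Theorem~3.3.'' In fact your write-up supplies more detail than the paper does on the two points it leaves implicit --- the injectivity of the parametrizations (3.3)--(3.4) and the translation of the ranges $\frac{r}{p+1}<l<\frac{r}{p}$ into the summand $\lceil\frac{r}{p}\rceil-\lfloor\frac{r}{p+1}\rfloor-1$ --- and your bookkeeping reproduces the stated cardinality exactly.
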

\begin{proof}
Similar to the proof of Theorem 3.3, we obtain that $\mathcal{C}$ is a cyclic $(n,2k-2,k)_q$-CDC of size $\frac{\left((r-1+\sum\limits_{i=2}^{p_0}(\lceil\frac{r}{i}\rceil-\lfloor \frac{r}{i+1} \rfloor-1))(q^k-1)(q-1)+r-1\right)(q^k-1)^{r-2}\lfloor \frac{q^k-2}{2}\rfloor(q^n-1)}{q-1}$.
\end{proof}

\begin{rem}
The cyclic $(n,2k-2,k)_q$-CDC provided by Theorem 3.3 in \cite{Yu2024} 
 is of size $\lfloor \frac{q^k-2}{2}\rfloor(r-1)(q^k-1)^{r-1}(q^n-1)$, where $k\ge 2,n=2rk$ with $r\ge 2$. Therefore, by Theorem 3.7, the size of our cyclic $(n,2k-2,k)$-CDC is larger than the result in \cite{Yu2024}, 
cf. Table 1.
\end{rem}


\begin{example}
Let $q=5,k=3$ and $n=16\times 3=48$. The cyclic $(48,4,3)_5$-CDC provided by Theorem 3.3 in \cite{Yu2024} 
is of size
$\lfloor \frac{q^k-2}{2}\rfloor(r-1)(q^k-1)^{r-1}(q^n-1)=\lfloor \frac{5^3-2}{2}\rfloor7(5^3-1)^{7}(5^{48}-1)$ and rate approximately $0.505$ according to \cite[Definition 2]{Koetter2008}. 
Since $r=8$, $p_0=2.$
By Theorem 3.7, we give a cyclic $(48,4,3)_5$-CDC of size
$\frac{(32(5^3-1)+7)(5^3-1)^6\lfloor \frac{5^3-2}{2}\rfloor(5^{48}-1)}{4}$ $>8\lfloor \frac{5^3-2}{2}\rfloor(5^3-1)^7(5^{48}-1)>\lfloor \frac{5^3-2}{2}\rfloor7(5^3-1)^{7}(5^{48}-1)$ and rate approximately $0.506$.
Therefore, in this case, the size of cyclic $(48,4,3)_5$-CDC constructed in Theorem 3.7 is larger than that presented in \cite{Yu2024} 
 and is also the best known code.
\end{example}


\begin{rem} 
In the case of $n=4k$, Theorem 3.7 produces a cyclic $(4k,2k-2,k)_q$-CDC of size $(q^k-1)\lfloor \frac{q^k-2}{2}\rfloor(q^{4k}-1)+ \lfloor \frac{q^k-2}{2}\rfloor \frac{q^{4k}-1}{q-1}$. By Lemma 2.6, both the Sphere-packing bound and Johnson bound on the size of the cyclic $(4k,2k-2,k)_q$-CDC are given by $\frac{(q^{4k}-1)(q^{4k-1}-1)}{(q^k-1)(q^{k-1}-1)}$. While these bounds are generally distinct, they coincide in the case of $n=4k$. Therefore, when $k$ tends to infinity, the ratio between the size of our CDC in Theorem 3.7 and the bounds in Lemma 2.6 is approximately equal to $\frac{1}{2}$.
 \end{rem}

\subsection{Constructions of cyclic CDCs from Subspace polynomials}
In this subsection, we generalize the constructions of cyclic CDCs by subspace polynomials in \cite[Theorem 4.2]{Zhao2019}, 
 which can be seen as the special case of the following theorem.

\begin{theo}
Let $e,n,k,s$ be positive integers and $k>2$, $1 \le s < k-1$. For $1\le i\le e $, suppose that $V_i$ is the set of roots of the subspace polynomial
\begin{flalign*}
P_{ V_{i}}(x)=x^{q^k}+\gamma_{s+1,i}x^{q^{s+1}}+\gamma_{s,i}x^{q^s}+\gamma_{s-1,i}x^{q^{s-1}}+\cdots +\gamma_{0,i}x \in \mathbb{F}_{q^n}[x]
\end{flalign*}
satisfying $\gamma_{s+1,i}\ne 0$, $\gamma_{s,i}\ne 0$ and $\gamma_{0,i}\ne 0$.
Let $V_i~(1\le i\le e)$ be contained in $\mathbb{F}_{q^N}$. For $1 \le i,j \le e$ and $\alpha \in \mathbb{F}_{q^N}^*$, denote the matrix $M_{\alpha}^{i,j}$ by

\small
$$
\begin{pmatrix}
r_{s+1,\alpha}^{q^{k-s-1}} & 0 & 0 & 0 & 0 & \cdots & 0 & 0 & 0 &1\\
r_{s,\alpha}^{q^{k-s-1}} & r_{s+1,\alpha}^{q^{k-s-2}} & 0 & 0 & 0 & \cdots & 0 & 0 & 0 &0 \\
r_{s-1,\alpha}^{q^{k-s-1}} & r_{s,\alpha}^{q^{k-s-2}} & r_{s+1,\alpha}^{q^{k-s-3}} & 0 & 0 & \cdots & 0 & 0 & 0 &0 \\
r_{s-2,\alpha}^{q^{k-s-1}} & r_{s-1,\alpha}^{q^{k-s-2}} & r_{s,\alpha}^{q^{k-s-3}} & r_{s+1,\alpha}^{q^{k-s-4}} & 0 & \cdots & 0 & 0 & 0 &0 \\
\vdots & \vdots & \vdots & \vdots & \vdots & \vdots & \vdots & \vdots & \vdots & \vdots \\
r_{0,\alpha}^{q^{k-s-1}} & r_{1,\alpha}^{q^{k-s-2}} & r_{2,\alpha}^{q^{k-s-3}} & r_{3,\alpha}^{q^{k-s-4}} & r_{4,\alpha}^{q^{k-s-5}} & \cdots &0 & 0 & 0 &0 \\
0 & r_{0,\alpha}^{q^{k-s-2}} & r_{1,\alpha}^{q^{k-s-3}} & r_{2,\alpha}^{q^{k-s-4}} & r_{3,\alpha}^{q^{k-s-5}} & \cdots &0 & 0 & 0 &0 \\
\vdots & \vdots & \vdots & \vdots & \vdots & \vdots & \vdots & \vdots & \vdots & \vdots \\
0 & 0 & 0 & 0 & 0 & \cdots & r_{s-1,\alpha}^{q^2} & r_{s,\alpha}^q & r_{s+1,\alpha} &\gamma_{s+1,i}\\
0 & 0 & 0 & 0 & 0 & \cdots & r_{s-2,\alpha}^{q^2} & r_{s-1,\alpha}^q & r_{s,\alpha} &\gamma_{s,i}\\
\vdots & \vdots & \vdots & \vdots & \vdots & \vdots & \vdots & \vdots & \vdots & \vdots \\
0 & 0 & 0 & 0 & 0 & \cdots & 0 & 0 & r_{0,\alpha} &\gamma_{0,i}
\end{pmatrix}_{(k+1)\times (k-s+1)},
$$
where $r_{t,\alpha}=\gamma_{t,i}-\gamma_{t,j}\alpha^{q^k-q^t}$ for $0 \le t \le s+1$. If the following two conditions are satisfied:

(1) for any $1 \le i, j \le e$ and any  $\alpha \in \mathbb{F}_{q^N}^* \setminus \left( \mathbb{F}_{q^{k - s - 1}}^* \cup \mathbb{F}_{q^k}^* \right)$, the rank of  $M_{\alpha}^{i,j}$ is  $k - s + 1$; 

(2) for any  $1 \le i, j \le e$ with $i \ne j$, there exists  
$h \in \{1, 2, \dots, s+1\}$ such that  $gcd(h,n)=1$ and 
\begin{flalign*}
\left( \frac{\gamma_{0,i}}{\gamma_{0,j}} \right)^{\frac{q^h - 1}{q - 1}} 
\ne \left( \frac{\gamma_{0,i}}{\gamma_{0,j}} \left( \frac{\gamma_{h,i}}{\gamma_{h,j}} \right)^{-1} \right)^{\frac{q^k - 1}{q - 1}},
\end{flalign*}
then $\mathcal{C}=\bigcup\limits_{i=1}^e\{\alpha V_i : \alpha \in \mathbb{F}_{q^N}^*\}$ is a cyclic $(N,d,k)_q$-CDC of size $e\frac{q^N-1}{q-1}$, where $d \ge 2k-2s$.
\end{theo}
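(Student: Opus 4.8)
The plan is to build the code $\mathcal C$ as a union of $e$ single orbits $\mathrm{orb}(V_i)$ and verify two things: each orbit has minimum distance $\ge 2k-2s$ (so that it is a CDC of the advertised parameters), and any two distinct orbits intersect "weakly enough" that their union does not lower the minimum distance below $2k-2s$; together with the size count this gives the theorem. First I would recall from Lemma 2.9 that for $\alpha\in\mathbb F_{q^N}^*$ the polynomial $P_{\alpha V_j}(x)$ is obtained from $P_{V_j}(x)$ by replacing each coefficient $\gamma_{t,j}$ with $\alpha^{q^k-q^t}\gamma_{t,j}$, and that $\dim(V_i\cap\alpha V_j)$ equals the $q$-degree of $\gcd\!\big(P_{V_i}(x),P_{\alpha V_j}(x)\big)$. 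The central computation is therefore: for which $\alpha$ does this gcd have large $q$-degree? Writing $D(x)=P_{V_i}(x)-P_{\alpha V_j}(x)=\sum_{t=0}^{s+1} r_{t,\alpha}\,x^{q^t}$ (a linearized polynomial of $q$-degree $\le s+1$, since the two polynomials share the leading term $x^{q^k}$), the gcd of $P_{V_i}$ and $P_{\alpha V_j}$ divides $D(x)$ and also divides $P_{V_i}$; the condition "$\dim(V_i\cap\alpha V_j)\le s$" is what forces $d\ge 2k-2s$. So I would show that $\dim(V_i\cap\alpha V_j)\le s$ for all admissible $\alpha$, which amounts to showing that $P_{V_i}$ and $P_{\alpha V_j}$ cannot share a common linearized factor of $q$-degree $\ge s+1$.

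The matrix $M_\alpha^{i,j}$ is exactly the linear-algebra gadget that detects this. I would interpret it as follows: a common $\mathbb F_q$-subspace of dimension $k-s$ inside both root spaces would correspond, via repeated $q$-th power / shift operations on the coefficient vectors, to a nontrivial relation among the rows; more precisely, $M_\alpha^{i,j}$ is built so that performing "polynomial division" of $P_{V_i}$ by $D(x)$ (in the ring of linearized polynomials, viewed through the associated Ore-type recursion on coefficients, where multiplication twists exponents by powers of $q$) produces a remainder whose vanishing is equivalent to $M_\alpha^{i,j}$ dropping rank below $k-s+1$. Thus Condition (1) — full column rank $k-s+1$ for $\alpha\notin \mathbb F_{q^{k-s-1}}^*\cup\mathbb F_{q^k}^*$ — guarantees $D(x)\nmid P_{V_i}(x)$ with the requisite multiplicity, hence $\gcd$ has $q$-degree $\le s$, i.e. $\dim(V_i\cap\alpha V_j)\le s$, for all such $\alpha$. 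The two excluded sets have to be handled directly: for $\alpha\in\mathbb F_{q^k}^*$ one has $P_{\alpha V_j}=P_{V_j}$ up to scalars on the coefficients when $i=j$ (giving $\alpha V_i=V_i$, accounting for the orbit size $\frac{q^N-1}{q^k-1}\cdot\frac{q^k-1}{q-1}$... rather $\frac{q^N-1}{q-1}$ after the sign/scalar identification), and $\alpha\in\mathbb F_{q^{k-s-1}}^*$ is a genuinely small exceptional locus that must be excluded by a separate argument.

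This is where Condition (2) enters, and where I expect the main obstacle to lie. Condition (2) is an obstruction preventing two \emph{distinct} orbits from coinciding or from having a shift $\alpha$ with $\alpha\in\mathbb F_{q^{k-s-1}}^*$ (equivalently, preventing $V_i$ and $\alpha V_j$ from being equal for some bad $\alpha$); concretely, by Lemma 2.8 two subspaces are equal iff their subspace polynomials are equal, so $\alpha V_j=V_i$ forces $\alpha^{q^k-q^t}=\gamma_{t,i}/\gamma_{t,j}$ simultaneously for $t=0$ and $t=h$, and eliminating $\alpha$ between the $t=0$ and $t=h$ relations yields precisely the equality $\big(\gamma_{0,i}/\gamma_{0,j}\big)^{(q^h-1)/(q-1)}=\big(\tfrac{\gamma_{0,i}}{\gamma_{0,j}}(\tfrac{\gamma_{h,i}}{\gamma_{h,j}})^{-1}\big)^{(q^k-1)/(q-1)}$, using $\gcd(h,n)=1$ to control the norm/exponent arithmetic in $\mathbb F_{q^N}^*$. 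So Condition (2) exactly negates this, ruling out orbit collisions and ensuring the $e$ orbits are pairwise disjoint, whence $|\mathcal C|=e\cdot\frac{q^N-1}{q-1}$.

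I would then assemble the pieces: Lemma 2.3-style reasoning (via Lemma 2.9 and the gcd computation) gives $d(\mathrm{orb}(V_i))\ge 2k-2s$ for each $i$; the rank condition (1) together with handling of the excluded loci gives $\dim(V_i\cap\alpha V_j)\le s$ for $i\ne j$ and all $\alpha$, so the cross-distances are also $\ge 2k-2s$; condition (2) gives disjointness of the orbits; each orbit has size $\frac{q^N-1}{q-1}$ because each $V_i$ has trivial $\mathbb F_q$-stabilizer beyond scalars (the presence of the terms $x^{q^{s+1}}$, $x^{q^s}$, $x$ with nonzero coefficients and $s+1<k$ forces $\mathbb F_q$ to be the largest field over which $V_i$ is linear, so Lemma 2.1... actually the orbit of a single subspace always has size $\frac{q^N-1}{q^d-1}$ with $\mathbb F_{q^d}$ the largest field of linearity, and here $d=1$). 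The hard part of writing this out carefully will be the bookkeeping in the linearized-polynomial division that identifies the rank of $M_\alpha^{i,j}$ with the gcd degree — tracking the twisted exponents $q^{k-s-1},q^{k-s-2},\dots$ down each column and confirming that the two exceptional fields $\mathbb F_{q^{k-s-1}}^*$ and $\mathbb F_{q^k}^*$ are precisely where the construction of $M_\alpha^{i,j}$ degenerates — and then checking that $D(x)$ having $q$-degree $<s+1$ (i.e. $r_{s+1,\alpha}=0$, which can happen) does not create a separate loophole; this last subtlety, I suspect, is exactly why the paper phrases Condition (1) with that particular excluded set and why $\gamma_{s+1,i}\ne 0$ is assumed.
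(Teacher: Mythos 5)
Your outline follows essentially the same route as the paper: establish each orbit's size $\frac{q^N-1}{q-1}$ from the fact that $\gcd(s,s+1)=1$ forces $\mathbb{F}_q$-linearity only, bound $\dim(V_i\cap\alpha V_j)$ via the $q$-degree of $\gcd\bigl(P_{V_i},\,P_{V_i}-P_{\alpha V_j}\bigr)$ with the rank of $M_\alpha^{i,j}$ serving as the resultant-type certificate, and use Condition (2) by eliminating $\alpha$ between the $t=0$ and $t=h$ coefficient relations to rule out $V_i=\alpha V_j$ (hence both the disjointness of orbits and the simultaneous vanishing of $r_{0,\alpha}$ and $r_{h,\alpha}$). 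The paper likewise defers the detailed case analysis on $r_{0,\alpha}$ and $r_{s+1,\alpha}$ and the gcd-degree bookkeeping to the proof of Theorem 4.2 in Zhao--Tang, so your proposal matches its structure and level of detail.
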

\begin{proof}
Firstly, we prove that $\{\alpha V_i: \alpha \in \mathbb{F}_{q^N}^*\}$ is a cyclic $(N,d,k)_q$-CDC of size $\frac{q^N-1}{q-1}$, where $d \ge 2k-2s$ for $1\le i\le e$.
Note that $V_i$ is $\mathbb{F}_{q^t}$-linear if and only if $P_{V_i}(x)$ is a $q^t$-polynomial. Since gcd$(s,s+1)=1$, then $P_{V_i}(x)$ is a $q$-polynomial and $|\{\alpha V_i: \alpha \in \mathbb{F}_{q^N}^*\}|= \frac{q^N-1}{q-1}$. Therefore, it is enough to prove that $d(V_i , \alpha V_i)\ge 2k-2s, ~\forall \alpha \in \mathbb{F}_{q^N}^* \backslash \mathbb{F}_{q}^*$. According to Lemma 2.9,
\begin{flalign*}
P_{\alpha V_i}(x)=x^{q^k}+\alpha^{q^k-q^{s+1}} \gamma_{s+1,i}x^{q^{s+1}}+\alpha^{q^k-q^{s}}\gamma_{s,i}x^{q^s}+\alpha^{q^k-q^{s-1}}\gamma_{s-1,i}x^{q^{s-1}}+\cdots+\alpha^{q^k-1}\gamma_{0,i}x.
\end{flalign*}
Let
\begin{flalign*}
R_{\alpha}(x)=P_{V_i}(x)-P_{\alpha V_i}(x)=r_{s+1,\alpha}x^{q^{s+1}}+r_{s,\alpha}x^{q^s}+r_{s-1,\alpha}x^{q^{s-1}}+\cdots +r_{0,\alpha}x,
\end{flalign*}
where $r_{t,\alpha}=\gamma_{t,i}(1-\alpha^{q^k-q^t})$ for $0 \le t \le s+1$.
Since
\begin{flalign*}
\mathrm{dim}(V_i \cap \alpha V_i)=\mathrm{log}_q(\mathrm{deg}(\mathrm{gcd}(P_{V_i}(x),P_{\alpha V_i}(x))))=\mathrm{log}_q(\mathrm{deg}(\mathrm{gcd}(P_{V_i}(x),R_{\alpha}(x)))),
\end{flalign*}
it is sufficient to prove that deg(gcd$(P_{V_i}(x),R_{\alpha}(x)))\le q^s,~\forall \alpha \in \mathbb{F}_{q^N}^* \backslash \mathbb{F}_{q}^*$. We distinguish between three cases that $r_{0,\alpha}=0;~r_{s+1,\alpha}=0$; $r_{s+1,\alpha}\ne 0,r_{0,\alpha}\ne 0$.
We can obtain that deg(gcd$(P_{V_i}(x),R_{\alpha}(x)))\le q^s$, $\forall \alpha \in \mathbb{F}_{q^N}^* \backslash \mathbb{F}_{q}^*$ for above three cases, similar to the proof of \cite[Theorem 4.2]{Zhao2019}. 

Next we prove that
\begin{flalign*}
\mathrm{dim}(V_i \cap \alpha V_j)\le 2s ~\mathrm{for} ~\mathrm{any} ~\alpha \in \mathbb{F}_{q^N}^* ~\mathrm{and} ~1\le i\ne j \le e.
\end{flalign*}
Suppose that
\begin{flalign*}
R_{\alpha}(x)=P_{V_{i}}(x)-P_{\alpha V_j}(x)=r_{s+1,\alpha}x^{q^{s+1}}+r_{s,\alpha}x^{q^s}+r_{s-1,\alpha}x^{q^{s-1}}+\cdots+r_{0,\alpha}x,
\end{flalign*}
where $r_{t,\alpha}=\gamma_{t,i}-\gamma_{t,j}\alpha^{q^k-q^t}$ for $0 \le t \le s+1$. If $V_{i}=\alpha V_j$, then
\begin{flalign*}
\gamma_{h,i}-\gamma_{h,j}\alpha^{q^k-q^{h}}=0~ \mathrm{and} ~\gamma_{0,i}-\gamma_{0,j}\alpha^{q^k-1}=0,
\end{flalign*}
where $h\in \{1,2,...,s+1\}$.
It follows that $\frac{\gamma_{0,i}}{\gamma_{0,j}}=\alpha^{q^k-1}$ and $\frac{\gamma_{0,i}}{\gamma_{0,j}}\left(\frac{\gamma_{h,i}}{\gamma_{h,j}}\right)^{-1}=\alpha^{q^{h}-1}$, which means that
\begin{flalign*}
\left(\frac{\gamma_{0,i}}{\gamma_{0,j}}\right)^{\frac{q^{h}-1}{q-1}}=\alpha^{\frac{(q^{h}-1)(q^k-1)}{q-1}}
\mathrm{and}~ \left(\frac{\gamma_{0,i}}{\gamma_{0,j}}\left(\frac{\gamma_{h,i}}{\gamma_{h,j}}\right)^{-1}\right)^{\frac{q^k-1}{q-1}}=\alpha^{\frac{(q^{h}-1)(q^k-1)}{q-1}}.
\end{flalign*}
This leads to
\begin{flalign*}
\left(\frac{\gamma_{0,i}}{\gamma_{0,j}}\right)^{\frac{q^{h}-1}{q-1}}=\alpha^{\frac{(q^k-1)(q^{h}-1)}{q-1}} =\left(\frac{\gamma_{0,i}}{\gamma_{0,j}}\left(\frac{\gamma_{h,i}}{\gamma_{h,j}}\right)^{-1}\right)^{\frac{q^{k}-1}{q-1}},
\end{flalign*}
which contradicts our condition. Therefore, $r_{s+1,\alpha}$ and $r_{0,\alpha}$ cannot be zero simultaneously. At this point, taking arguments similar to those used in the proof of \cite[Theorem 4.2]{Zhao2019}, 
we obtain the desired result.
\end{proof}

In particular, we discuss the special case of $q=2$ in the following corollary. Moreover, we replace the coefficient constraints in Theorem 3.11 with more specific conditions, which lead to the following result.

\begin{coro}
Let $k,n,s,r$ be positive integers and $1 \le s < k-1$. For each $1 \le i \le e$, suppose $V_i$ is the root set of the subspace polynomial  
\begin{flalign*}
P_{V_i}(x) = x^{2^k} + \gamma_{s+1,i} x^{2^{s+1}} + \gamma_{s,i} x^{2^s} + \gamma_{s-1,i} x^{2^{s-1}} + \cdots + \gamma_{0,i} x \in \mathbb{F}_{2^n}[x],
\end{flalign*}
satisfying $\gamma_{s+1,i} \ne 0$, $\gamma_{s,i} \ne 0$ and $\gamma_{0,i} \ne 0$. Assume $V_i$ is contained in $\mathbb{F}_{q^N}$. If   the following two conditions are satisfied:

(1) for any  $1 \le i, j \le e$ and any $\alpha \in \mathbb{F}_{q^N}^* \setminus \left( \mathbb{F}_{q^{k - s - 1}}^* \cup \mathbb{F}_{q^k}^* \right)$,  the rank of  $M_{\alpha}^{i,j}$ is  $k - s + 1$; 

(2) for any $1\le i , j \le e$ with $i\ne j$, we have  $\gamma_{0,i} \ne \gamma_{0,j}$, and there exists $h \in \{1, 2, \dots, s+1\}$ such that  
$\mathrm{gcd}(h,n) = 1, \gamma_{h,i} = \gamma_{0,i} \in \mathbb{F}_{q^n}^*~ \text{and}~ \gamma_{h,j} = \gamma_{0,j} \in \mathbb{F}_{q^n}^*$,\\
then $\mathcal{C} = \bigcup_{i=1}^e \{ \alpha V_i : \alpha \in \mathbb{F}_{2^N}^* \}$ is a cyclic $(N,d,k)_2$-CDC of size $e(2^N-1)$, where $d \ge 2k-2s$.
\end{coro}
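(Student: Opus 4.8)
The plan is to obtain Corollary 3.12 as a direct specialization of Theorem 3.11 to the prime power $q=2$, so that the only work is to check that the two (more concrete) hypotheses of the corollary imply conditions (1) and (2) of that theorem. First I would note that, with $q=2$, the predicted size $e\frac{q^N-1}{q-1}$ becomes exactly $e(2^N-1)$, the promised distance bound $d\ge 2k-2s$ is unchanged, and the matrix $M_\alpha^{i,j}$, the excluded set $\mathbb{F}_{q^{k-s-1}}^*\cup\mathbb{F}_{q^k}^*$, and the entries $r_{t,\alpha}=\gamma_{t,i}-\gamma_{t,j}\alpha^{q^k-q^t}$ are all literally the same expressions. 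Hence condition (1) of the corollary \emph{is} condition (1) of Theorem 3.11, and the whole task reduces to deriving condition (2) of Theorem 3.11 from condition (2) of the corollary. (I would also remark that $1\le s<k-1$ forces $k>2$, so the standing hypothesis of Theorem 3.11 is met.)

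Next I would fix a pair $1\le i\ne j\le e$ and take the index $h\in\{1,\dots,s+1\}$ supplied by the corollary's hypothesis, so that $\gcd(h,n)=1$ and $\gamma_{h,i}=\gamma_{0,i}$, $\gamma_{h,j}=\gamma_{0,j}$, all lying in $\mathbb{F}_{2^n}^*$ (in particular nonzero, which is exactly what is needed for the ratios in Theorem 3.11(2) to be well defined). With these equalities the right-hand side of the inequality in Theorem 3.11(2) collapses:
\[
\left(\frac{\gamma_{0,i}}{\gamma_{0,j}}\left(\frac{\gamma_{h,i}}{\gamma_{h,j}}\right)^{-1}\right)^{\frac{q^k-1}{q-1}}
=\left(\frac{\gamma_{0,i}}{\gamma_{0,j}}\cdot\frac{\gamma_{0,j}}{\gamma_{0,i}}\right)^{\frac{q^k-1}{q-1}}=1,
\]
while, since $q=2$, the left-hand side equals $\beta^{\,2^h-1}$ with $\beta:=\gamma_{0,i}/\gamma_{0,j}\in\mathbb{F}_{2^n}^*$. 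So verifying condition (2) of Theorem 3.11 for this pair amounts precisely to showing $\beta^{\,2^h-1}\ne 1$.

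Finally I would prove $\beta^{\,2^h-1}\ne 1$ using that $\beta\ne 1$ (because $\gamma_{0,i}\ne\gamma_{0,j}$) together with the classical identity $\gcd(2^h-1,\,2^n-1)=2^{\gcd(h,n)}-1=2^1-1=1$: the multiplicative order of $\beta$ divides $2^n-1$, and if also $\beta^{\,2^h-1}=1$ then that order divides $2^h-1$, hence divides $\gcd(2^h-1,2^n-1)=1$, forcing $\beta=1$, a contradiction. Thus both conditions of Theorem 3.11 hold, and applying it (with $q=2$, $N$ the given field size) yields that $\mathcal{C}=\bigcup_{i=1}^e\{\alpha V_i:\alpha\in\mathbb{F}_{2^N}^*\}$ is a cyclic $(N,d,k)_2$-CDC of size $e(2^N-1)$ with $d\ge 2k-2s$. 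There is no deep obstacle here, since the corollary is a specialization: the only substantive point is the number-theoretic step $\gcd(2^h-1,2^n-1)=2^{\gcd(h,n)}-1$, which is exactly where the coprimality $\gcd(h,n)=1$ enters, plus the small bookkeeping that the substituted coefficients $\gamma_{h,i}=\gamma_{0,i}$ remain nonzero so that all the quantities appearing in Theorem 3.11(2) are defined.
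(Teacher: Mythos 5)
Your proposal is correct and follows essentially the same route as the paper: both reduce the corollary to Theorem 3.11, observe that with $\gamma_{h,i}=\gamma_{0,i}$ and $\gamma_{h,j}=\gamma_{0,j}$ condition (2) collapses to $(\gamma_{0,i}/\gamma_{0,j})^{2^h-1}\ne 1$, and then use $\gcd(h,n)=1$ to force $\gamma_{0,i}/\gamma_{0,j}=1$ otherwise. The paper phrases the last step as $\mathbb{F}_{2^h}^*\cap\mathbb{F}_{2^n}^*=\mathbb{F}_2^*$ rather than via $\gcd(2^h-1,2^n-1)=2^{\gcd(h,n)}-1$, but these are the same standard fact.
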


\begin{proof}
 For $q = 2$, there exists $h \in \{1, 2, \dots, s+1\}$ such that $\gamma_{h,i} = \gamma_{0,i} \in \mathbb{F}_{q^n}^*$ and $\gamma_{h,j} = \gamma_{0,j} \in \mathbb{F}_{q^n}^*$, Condition 2 of Theorem 3.11 becomes $\left(\frac{\gamma_{0,i}}{\gamma_{0,j}}\right)^{2^h - 1} \ne 1$.  
Note that $\mathrm{gcd}(h,n) = 1$. If $\left(\frac{\gamma_{0,i}}{\gamma_{0,j}}\right)^{2^h - 1} = 1$, then $\frac{\gamma_{0,i}}{\gamma_{0,j}} \in \mathbb{F}_{2^h}^* \cap \mathbb{F}_{2^n}^* = \mathbb{F}_2^*$, i.e., $\gamma_{0,i} = \gamma_{0,j}$, which contradicts the assumption. Therefore, Condition 2 of Theorem 3.11 is satisfied, and the rest of the proof follows similarly to Theorem 3.11.
\end{proof}

\begin{rem}
Zhao et al.~\cite{Zhao2019} introduced a construction of cyclic \( (N, 2k - 2, k)_q \)-CDCs based on subspace polynomials.  
We extend this work by presenting a more general construction of cyclic $ (N, d, k)_q $-CDCs of size $e \frac{q^N - 1}{q - 1} $, where $d \ge 2k - 2s $.  
Specifically, Theorem~3.11 includes Theorem~4.2 in~\cite{Zhao2019} in the case $s = 1$ and $e = 1 $.
Moreover, compared to previous constructions based on trinomials, our coefficient selection allows greater flexibility. Under certain parameter settings,  we can construct optimal cyclic CDCs with either more admissible values of $N$ or larger size (see Example 3.14).
\end{rem}

\begin{example}
Let $s = 1$, $k = 3$, $q = 2$ and $n = 2$. According to Corollary 3.13 of \cite{Chen2018} and Theorem 4.2 in \cite{Zhao2019}, under the case $N = 14$, the constructed optimal cyclic CDC has size at most $2^{14} - 1$.  
In contrast, Corollary 3.12 in this paper allows the construction of an optimal cyclic CDC of size $3(2^{14} - 1)$.  
Take $h = 1$, then we have
\begin{align*}
&P_{V_1}(x)=x^{2^3}+\xi x^{2^2}+x^{2}+x,~P_{V_2}(x)=x^{2^3}+\xi x^{2^2}+\xi x^{2}+\xi x, \\
&P_{V_3}(x)=x^{2^3}+\xi^2 x^{2^2}+\xi^2 x^{2}+\xi^2x.
\end{align*}
It is noted that the polynomial coefficients satisfy Condition 1 in Corollary 3.12.  
For any $1 \le i, j \le 3$ and any $\alpha \in \mathbb{F}_{2^{14}}^* \setminus \left( \mathbb{F}_2^* \cup \mathbb{F}_{2^3}^* \right)$, the following matrices $M^{i,j}_{\alpha}$ are defined:

\vspace{0.5em}

\noindent
\begin{minipage}[t]{0.48\textwidth}
\[
M^{1,1}_{\alpha} =
\begin{pmatrix}
\left( \xi(1 - \alpha^4) \right)^2 & 0 & 1 \\
\left( 1 - \alpha^6 \right)^2 & \xi(1 - \alpha^4) & \xi \\
\left( 1 - \alpha^7 \right)^2 & 1 - \alpha^6 & 1 \\
0 & 1 - \alpha^7 & 1
\end{pmatrix}
\]
\end{minipage}
\hfill
\begin{minipage}[t]{0.48\textwidth}
\[
M^{1,2}_{\alpha} =
\begin{pmatrix}
\left( \xi(1 - \alpha^4) \right)^2 & 0 & 1 \\
\left( 1 - \xi \cdot \alpha^6 \right)^2 & \xi(1 - \alpha^4) & \xi \\
\left( 1 - \xi \cdot \alpha^7 \right)^2 & 1 - \xi \cdot \alpha^6 & 1 \\
0 & 1 - \xi \cdot \alpha^7 & 1
\end{pmatrix}
\]
\end{minipage}

\vspace{1.5em}

\noindent
\begin{minipage}[t]{0.48\textwidth}
\[
M^{1,3}_{\alpha} =
\begin{pmatrix}
\left( \xi - \xi^2 \cdot \alpha^4 \right)^2 & 0 & 1 \\
\left( 1 - \xi^2 \cdot \alpha^6 \right)^2 & \xi - \xi^2 \cdot \alpha^4 & \xi \\
\left( 1 - \xi^2 \cdot \alpha^7 \right)^2 & 1 - \xi^2 \cdot \alpha^6 & 1 \\
0 & 1 - \xi^2 \cdot \alpha^7 & 1
\end{pmatrix},
\]
\end{minipage}
\hfill
\begin{minipage}[t]{0.48\textwidth}
\[
M^{2,2}_{\alpha} =
\begin{pmatrix}
\left( \xi(1 - \alpha^4) \right)^2 & 0 & 1 \\
\left( \xi(1 - \alpha^6) \right)^2 & \xi(1 - \alpha^4) & \xi \\
\left( \xi(1 - \alpha^7) \right)^2 & \xi(1 - \alpha^6) & \xi \\
0 & \xi(1 - \alpha^7) & \xi
\end{pmatrix},
\]
\end{minipage}

\vspace{1.5em}

\noindent
\begin{minipage}[t]{0.48\textwidth}
\[
M^{2,3}_{\alpha} =
\begin{pmatrix}
\left( \xi - \xi^2 \cdot \alpha^4 \right)^2 & 0 & 1 \\
\left( \xi - \xi^2 \cdot \alpha^6 \right)^2 & \xi - \xi^2 \cdot \alpha^4 & \xi \\
\left( \xi - \xi^2 \cdot \alpha^7 \right)^2 & \xi - \xi^2 \cdot \alpha^6 & \xi \\
0 & \xi - \xi^2 \cdot \alpha^7 & \xi
\end{pmatrix},
\]
\end{minipage}
\hfill
\begin{minipage}[t]{0.48\textwidth}
\[
M^{3,3}_{\alpha} =
\begin{pmatrix}
\left( \xi^2(1 - \alpha^4) \right)^2 & 0 & 1 \\
\left( \xi^2(1 - \alpha^6) \right)^2 & \xi^2(1 - \alpha^4) & \xi^2 \\
\left( \xi^2(1 - \alpha^7) \right)^2 & \xi^2(1 - \alpha^6) & \xi^2 \\
0 & \xi^2(1 - \alpha^7) & \xi^2
\end{pmatrix}.
\]
\end{minipage}
Using Magma, it can be verified that for any $1 \le i, j \le 3$ and any $\alpha \in \mathbb{F}_{2^{14}}^* \setminus \left( \mathbb{F}_2^* \cup \mathbb{F}_{2^3}^* \right)$, the rank of $M^{i,j}_{\alpha}$ is 3. Therefore, Condition 2 of Corollary 3.12 is satisfied. It follows that $\mathcal{C} = \bigcup_{i=1}^3 \{ \alpha V_i : \alpha \in \mathbb{F}_{2^{14}}^* \}$ is a cyclic $(14,4,3)_2$-CDC of size $3(2^{14}-1)$.
Moreover, under this parameter setting, our construction can provide optimal cyclic CDCs with $N=6$, which cannot be achieved by trinomials.
\end{example}

\section{Conclusions and future work}
In this paper, we improved the cardinality of some cyclic $(n,2k-2,k)_q$-CDCs by further exploring the ideas proposed in \cite{Li2024} and \cite{Yu2024}. 
More explicitly, we presented several new constructions of Sidon spaces and ensured that the union of optimal cyclic orbit codes generated by these Sidon spaces did not change the minimum distance. Consequently, in Theorems 3.3 and 3.7, we provided two new constructions of cyclic $(n,2k-2,k)_q$-CDCs in the cases where $n=(2r+1)k$ and $n=2rk$, respectively. Moreover, our cyclic $(n,2k-2,k)_q$-CDCs have larger sizes than the best known results. 
Table 1 provides a comparison of the sizes of the constructions. 
 In the case of $n=4k$, when $k$ goes to infinity, the ratio between the size of our cyclic $(4k,2k-2,k)_q$-CDC and the Sphere-packing bound (Johnson bound) is approximately equal to $\frac{1}{2}$. 
Furthermore, we provided a new construction of cyclic $(N,d,k)_q$-CDC by subspace polynomials, where $1\le s< k-1$ and $d\ge 2k-2s$. 
Our construction generalizes previous results and, under certain parameter settings, provides cyclic CDCs with larger sizes or more admissible values of $N$ than those based on trinomial.

Recent developments have further advanced the theory of rank-metric codes based on linearized polynomials.
Neri et al.~\cite{Neri2022} extend the family of rank-metric codes which are $\mathbb{F}_{q^{2t}}$-linear MRD codes of dimension 2 in the space of linearized polynomials over $\mathbb{F}_{q^{2t}}$, where $t$ is any integer greater than 2. 
Bartoli and Ghiandoni~\cite{Bartoli2025} present results on the classification of $\mathbb{F}_{q^n}$-linear MRD codes of dimension 3. Therefore, combining these perspectives, a potential research avenue is to extend the family \(\mathcal{C}_{h,t,\sigma}\) described in \cite{Neri2022} to higher dimensions.
Next, we will consider this as a primary direction for our future work.

\section*{Acknowledgement}
The authors thank the Associated Editor and anonymous reviewers for their valuable comments that improved the presentation and quality of this paper. G. Wang, M. Xu and Y. Gao are supported by the National Natural Science Foundation of China (No. 12301670), the Natural Science Foundation of Tianjin (No. 23JCQNJC00050), the Scientific Research Project of Tianjin Education Commission (No. 2023ZD041), the Fundamental Research Funds for the Central Universities of China (No. 3122024PT24) and the Graduate Student Research and Innovation Fund of Civil Aviation University of China (No. 2023YJSKC06006).




\end{document}